\newcommand{\BEQA}{\begin{eqnarray}}
\newcommand{\EEQA}{\end{eqnarray}}
\newcommand{\bqed}{\nopagebreak\mbox{}\hfill$\blacksquare$\smallskip}
\newenvironment{bproof}{\noindent{\bf Proof}\hspace*{0.1em}}{\bqed}
\def\e{\epsilon}
\def\a{\alpha}
\def\t{\theta}
\def\p{\hat p}
\def\ph{\hat p}
\def \foral {\textrm{for all }}
\def \pr  {\mathbf{P}}
\def \R  {\mathbb{R}}
\def \E  {\mathbf{E}}
\def \foral {\textrm{for all }}
\def \pr  {\mathbf{Pr}}
\def \E  {\mathbf{E}}
\newtheorem{theorem}{Theorem}
\newtheorem{corollary}{Corollary}
\newtheorem{lemma}{Lemma}
\theoremstyle{definition}
\newtheorem{assumption}{Assumption}
\newtheorem{definition}{Definition}
\title{QoE-aware Media Streaming in Technology and Cost Heterogeneous Networks}
\author{Ali ParandehGheibi,  Asuman Ozdaglar and Muriel M\'edard\thanks{The authors are with the Department of Electrical Engineering and Computer Science, Massachusetts Institute of Technology, Cambridge, MA 02139 USA. (emails: \{parandeh, asuman, medard\}@mit.edu). {This paper was presented in part at the 49th IEEE Conference on Decision and Control.}}}
\begin{document}

\maketitle

\begin{abstract}

We present a framework for studying the problem of media streaming in technology and cost heterogeneous environments. We first address the problem of efficient streaming in a technology-heterogeneous setting. We employ random linear network coding to simplify the packet selection strategies and alleviate issues such as duplicate packet reception. Then, we study the problem of media streaming from multiple cost-heterogeneous access networks. Our objective is to characterize analytically the trade-off between access cost and user experience. We model the Quality of user Experience (QoE) as the probability of interruption in playback as well as the initial waiting time. We design and characterize various control policies, and formulate the optimal control problem using a  Markov Decision Process (MDP) with a probabilistic constraint. We present a characterization of the optimal policy using the Hamilton-Jacobi-Bellman (HJB) equation. For a fluid approximation model, we provide an exact and explicit characterization of a threshold policy and prove its optimality using the HJB equation.

Our simulation results show that under properly designed control policy, the existence of alternative access technology as a complement for a primary access network can significantly improve the user experience without any bandwidth over-provisioning.

\end{abstract}

\begin{keywords}
Media streaming, Quality of Experience (QoE), Heterogeneous networking, Network association policy, Network coding.
\end{keywords}

\section{Introduction}\label{introduction_sec}

Media streaming is fast becoming the dominant application on the Internet \cite{Lab09}. The popularity of available online content has been accompanied by the growing usage of wireless handheld devices as the preferred means of media access.  The predictions by Cisco Visual Networking Index \cite{cisco_VNI} show that by 2015,  the various forms of video (TV, VoD, Internet Video, and P2P) will exceed 90 percent of global consumer traffic, and 66 percent of total mobile traffic. In order to cope with the demand, the wireless service providers generally build new infrastructure. Another approach that seems to be gaining momentum in the industry is offloading mobile data traffic onto another network through dual-mode such as additional WiFi interfaces \cite{cisco_mobile}. This approach requires the wireless devices to operate seamlessly in an environment with heterogeneous access methods (WiFi, 3G and 4G) and different access costs (cf. Figure \ref{fig:MultiServer}). For example, accessing public WiFi networks is free but unreliable, while there are additional charges associated with more reliable 3G or 4G data networks. The goal of this work is to design network access policies that minimize the access cost in such heterogeneous environments, while guaranteeing an acceptable level of quality of user experience.

\begin{figure}[ht]
\centering
\includegraphics[scale=.7]{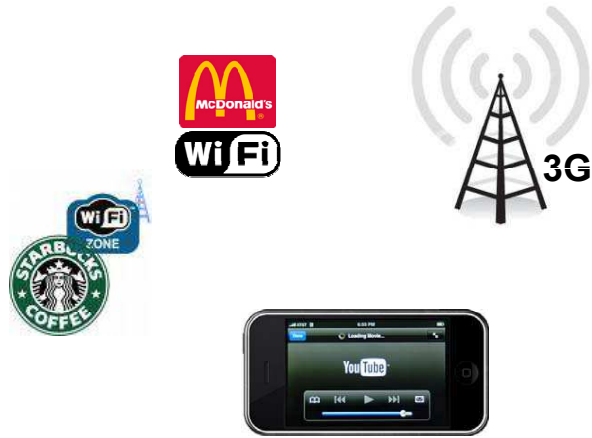}
  \caption{Media streaming from heterogeneous servers.}\label{fig:MultiServer}
\end{figure}


In particular, we focus on media streaming applications which are intrinsically delay-sensitive. Hence, they need to be managed differently from the traditional less delay-sensitive applications such as Web, Email and file downloads. Most of the current approaches for providing a reasonable Quality of Service (QoS) for streaming applications are based on resource over-provisioning, without considering the transient behavior of the service provided to such applications.
In this work, we pay special attention to communication and control techniques  that are specifically designed with Quality of user Experience (QoE) in mind. The goal, on one hand, is to make the optimal use of the limited and possibly unreliable resources to provide a seamless experience for the user. On the other hand, we would like to provide a tool for the service providers to improve their service delivery (specifically for delay-sensitive applications) in the most economical way. Our contributions are summarized  in the following.

We first address the problem of efficient streaming in technology-heterogeneous settings, where a user receives a stream over multiple paths from different servers. Each sever can be a wireless access point or another peer operating as a server. We consider a model that the communication link between the receiver and each server is unreliable, and hence, it takes a random period of time for each packet to arrive at the receiver from the time that the packet is requested from a particular server. One of the major difficulties with such multi-server systems is the packet tracking and duplicate packet reception problem, i.e., the receiver need to keep track of the index of the packets it is requesting from each server to avoid requesting duplicate packets. Since the requested information is delay sensitive, if a requested packet does not arrive within some time interval, the receiver need to request the packet from another server. This may eventually result in receiving duplicate packets and waste of resources. We address this issue and discuss that using random linear network coding  (RLNC)\cite{RLNC} across packets within each block of the media file we can alleviate this issue. This technique assures us that, with high probability, no redundant information will be delivered to the receiver.

We would like to emphasize that one of the critical roles of network coding techniques in this work, other than efficient and seamless streaming, is to  simplify greatly the communication models, so that we can focus on end-user metrics and trade-offs. For example, if each server can effectively transmit packets according to an independent Poisson process, using RLNC we can merge these processes into one Poisson process of sum rate. Hence, the system model boils down to a single-server single-receiver system.

We then study the problem of media streaming in a cost-heterogeneous environment. We consider a system wherein network coding is used to ensure that packet identities can be ignored, and packets may potentially be obtained from two classes of servers with different rates of transmission.  The wireless channel is unreliable, and we assume that each server can deliver packets according to a Poisson process with a known rate. Further, the costs of accessing the two servers are different; for simplicity we assume that one of the servers is free.  Thus, \emph{our goal is to develop an algorithm that switches between the free and the costly servers in order to satisfy the desired user experience at the lowest cost.}

The user experience metrics that we consider in this work are the initial buffering \emph{delay} before the media playback, and the probability of experiencing an interruption throughout media playback. Interruption probability captures the reliability of media playback. Such metrics best capture the user experience for most of media streaming applications e.g. Internet video, TV, Video on Demand (VoD), where the user may tolerate some initial delay, but expects a  smooth sequential playback. In \cite{JSAC}, we characterized the optimal trade-off between these metrics for a single-server single-receiver system. In particular, we established the following relation
\begin{equation}\label{int_exp}
    \textrm{Probability  of  interruption} = e^{-I(R) \cdot \textrm{(initial buffering)}},
\end{equation}
where $I(R)$ is the \emph{interruption exponent} or reliability function, which depends the arrival rate $R$ of the stream. This result is analogous to information theoretic error exponent results relating the error probability of a code to the block length of that code.

In a cost-heterogeneous system, the user experience such as initial waiting time may be improved by simultaneously accessing free and costly access methods. This adds another dimension to the problem the end-user is facing. Certain levels of user satisfaction can only be achieved by paying a premium for extra resource availability. Figure \ref{fig:3dim} illustrates a conceptual three-dimensional cost-delay-reliability trade-off curve.

\begin{figure}[ht]
\centering
\includegraphics[scale=0.9]{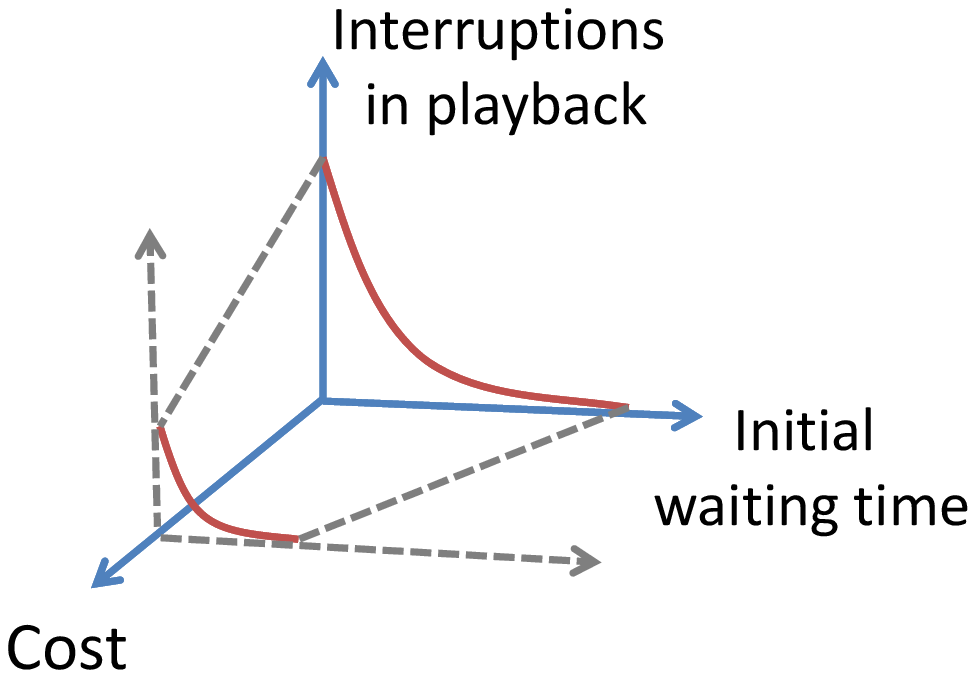}
  \caption{Trade-off between the achievable QoE metrics and cost of communication.}\label{fig:3dim}
\end{figure}

The objective of this paper is to understand the trade-off between initial waiting time, and the usage cost for attaining a target probability of interruption, and design control policies to achieve the optimal trade-off curve.  We study several classes of server selection policies.  Using the QoE trade-offs for a single-server system, we obtain a lower bound on the cost of offline policies that do not observe the trajectory of packets received.  We show that such policies have a threshold form in terms of the time of association with the costly server.  Using the offline algorithm as a starting point, we develop an \emph{online} algorithm with lower cost that has a threshold form -- both free and costly servers are used until the queue length reaches a threshold, followed by only free server usage.  We then develop an online risky algorithm in which the risk of interruption is spread out across the trajectory.  Here, only the free server is used whenever the queue length is above a certain threshold, while both servers are used when the queue length is below the threshold. The threshold is designed as a function of the initial buffering and the desired interruption probability. We numerically compare the performance of the proposed control policies. Our simulation results show online risky algorithm performs the best. Moreover, we observe that the existence of costly networks as a complement for unreliable but cheaper networks can significantly improve the user experience without incurring a significant cost.

We formulate the problem of finding the optimal network association policy  as a Markov Decision Process (MDP) with a probabilistic constraint. Similarly to the Bellman equation proposed by Chen \cite{Chen04}, for a discrete-time MDP with probabilistic constraints, we write the Hamilton-Jacobi-Bellman (HJB) equation for our continuous-time problem by proper state space expansion. The HJB equation is instrumental in optimality verification of a particular control policy for which the expected cost is explicitly characterized. However, due to discontinuity of the queue-length process for the Poisson arrival model, a closed-form characterizations of our proposed policies are not available for the verification of the HJB equation. Therefore, we consider a fluid approximation model, where the arrival process is modeled as a controlled Brownian motion with a drift. In this case, we provide and exact and explicit characterization of a threshold policy that satisfies the QoE constraints. We show that the expected cost corresponding to this threshold policy indeed is the solution of the corresponding HJB equation, thus proving the optimality of such policies.

\subsection{Related Work}

The set of works related to this thesis spans several distinct areas of the literature.
One of the major difficulties in the literature is the notion of \emph{delay}, which greatly varies across different applications and time scales at which the system is modeled. The role of delay-related metrics has been extensive in the literature on Network Optimization and Control. Neely \cite{Neely06, Neely10} employs Lyapunov optimization techniques to study the delay analysis of stochastic networks, and its trade-offs with other utility functions. Other related works such as \cite{Srikant99}, \cite{Berry02} take the  flow-based optimization approach, also known as Network Utility Maximization (NUM), to maximize the \emph{delay-related} utility of the users. Closer to our work is the one by Hou and Kumar \cite{HouKumar10} that considers per-packet delay constraints and successful delivery ratio as user experience metrics. Such flow-based approaches are essentially operating at the steady state of the system, and fail to capture the end-user experience for delay-sensitive applications of interest.

Media streaming, particularly in the area of P2P networks, has attracted significant recent interest. For example, works such as  \cite{ZhoChi07,BonMas08,ZhaLuiChi_09,YinSri10} develop analytical models on the trade-off between the steady state probability of missing a block and buffer size under different block selection policies.  Unlike our model, they consider live streaming, e.g. video conferencing, with deterministic channels. However, we focus on content that is at least partially cached at multiple locations, and must be streamed over one or more unreliable channels.  Further, our analysis is on transient effects---we are interested in the first time that media playback is interrupted as a function of the initial amount of buffering. Also related to our work is \cite{KumAlt07}, which considers two possible wireless access methods (WiFi and UMTS) for file delivery, assuming particular throughput models for each access method.  In contrast to this work, packet arrivals are stochastic in our model, and our streaming application requires hard constraints on quality of user experience.

Another body of related work is the literature on constrained Markov decision processes. There are two main approaches to these problems. Altman \cite{Altman_book}, Piunovskiy \cite{Piunovskiy97, Piunovskiy98} and Feinberg and Shwartz \cite{Feinberg95} take a convex analytic approach leading to linear programs for obtaining the optimal policies. On the other hand, Chen \cite{Chen04}, Chen and Blankenship \cite{ChenBlankenship04}, Piunovskiy \cite{Piunovskiy00} use the more natural and straightforward Dynamic Programming approach to characterize all optimal policies. These works mainly focus on different variations of the discrete-time Markov decision processes. In this work, we take the dynamic programming approach for the control of a continuous-time Markovian process. Further, we employ stochastic calculus techniques used in treatment of stochastic control problems \cite{BrockettBook} to properly characterize the optimal control policies.

The rest of this paper is organized as follows. In Section \ref{sec:coding}, we discuss using network coding techniques to guarantee efficient streaming in a technology heterogeneous system. Section \ref{sec:model} describes the system model and QoE metrics for a media streaming scenario from cost-heterogeneous servers. In Section \ref{sec:policies}, we present and compare several server association policies. The dynamic programming approach for characterization of the optimal control policy is discussed in Section \ref{dp_sec}. We present the fluid approximation model and establish the optimality of an online threshold policy in Section \ref{fluid_sec}. Finally, Section \ref{sec:conclusion} provides a summary of the contributions of this paper with pointers for potential extensions in the future.

\section{Network Coding for Efficient Streaming in Technology-heterogeneous Systems}\label{sec:coding}

In this part, we study the problem of streaming a media file from multiple servers to a single receiver over unreliable communication channels. Each of the servers could be a wireless access point, base station, another peer, or any combination of the above. Such servers may operate under different protocols in different ranges of the spectrum such as WiFi (IEEE 802.11), WiMAX (IEEE 802.16), HSPA, EvDo, LTE, etc. We refer to such system as a technology-heterogeneous multi-server system. In this setup, the receiver can request different pieces of the media file from different servers.  Requesting packets form each server may cause delays due to channel uncertainty. However, requesting one packet from multiple servers introduces the need to keep track of packets, and the duplicate packet reception problem. In this section, we discuss methods that enable efficient streaming across different paths and network interfaces. This greatly simplifies the model when analyzing such systems.

In order to resolve such issues of the multi-server and technology-heterogeneous systems, let us take a closer look at the process of media streaming across different layers.  Media files are divided into blocks of relatively large size, each consisting of several frames.  The video coding is such that all the frames in the block need to be available before any frames can be played. Blocks are requested in sequence by the playback application from the user-end.  The server (or other peers) packetize the requested block and transmit them to the user as in Figure \ref{layers_fig}.

\begin{figure}[htbp]
\centering
  \includegraphics[width=4in]{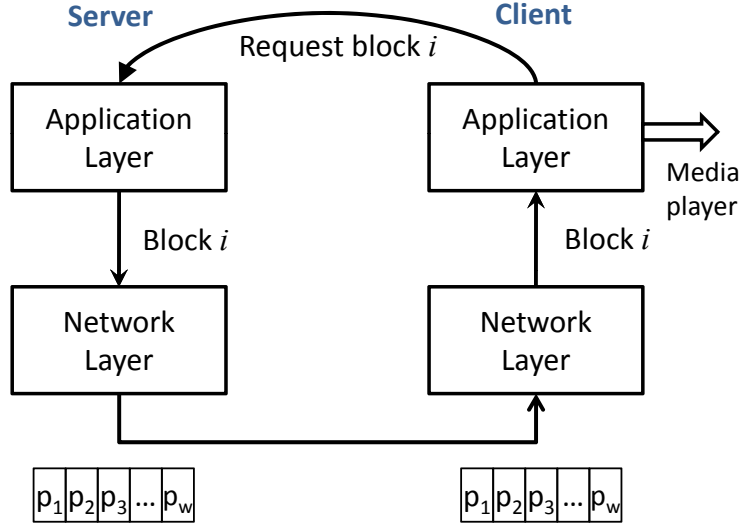}
  \caption{The media player (application layer) requires complete blocks. At the network layer each block is divided into packets and delivered. }\label{layers_fig}
\end{figure}

Now consider the scenario, illustrated in Figure \ref{multipath_fig}, where there are multiple paths to reach a particular server. Each of these paths could pass through different network infrastructures. For example, in Figure \ref{multipath_fig}, one of the paths is using the WiFi access point, while the other one is formed by the LTE (Long Term Evolution) network.

\begin{figure}[htbp]
\centering
  \includegraphics[width=5in]{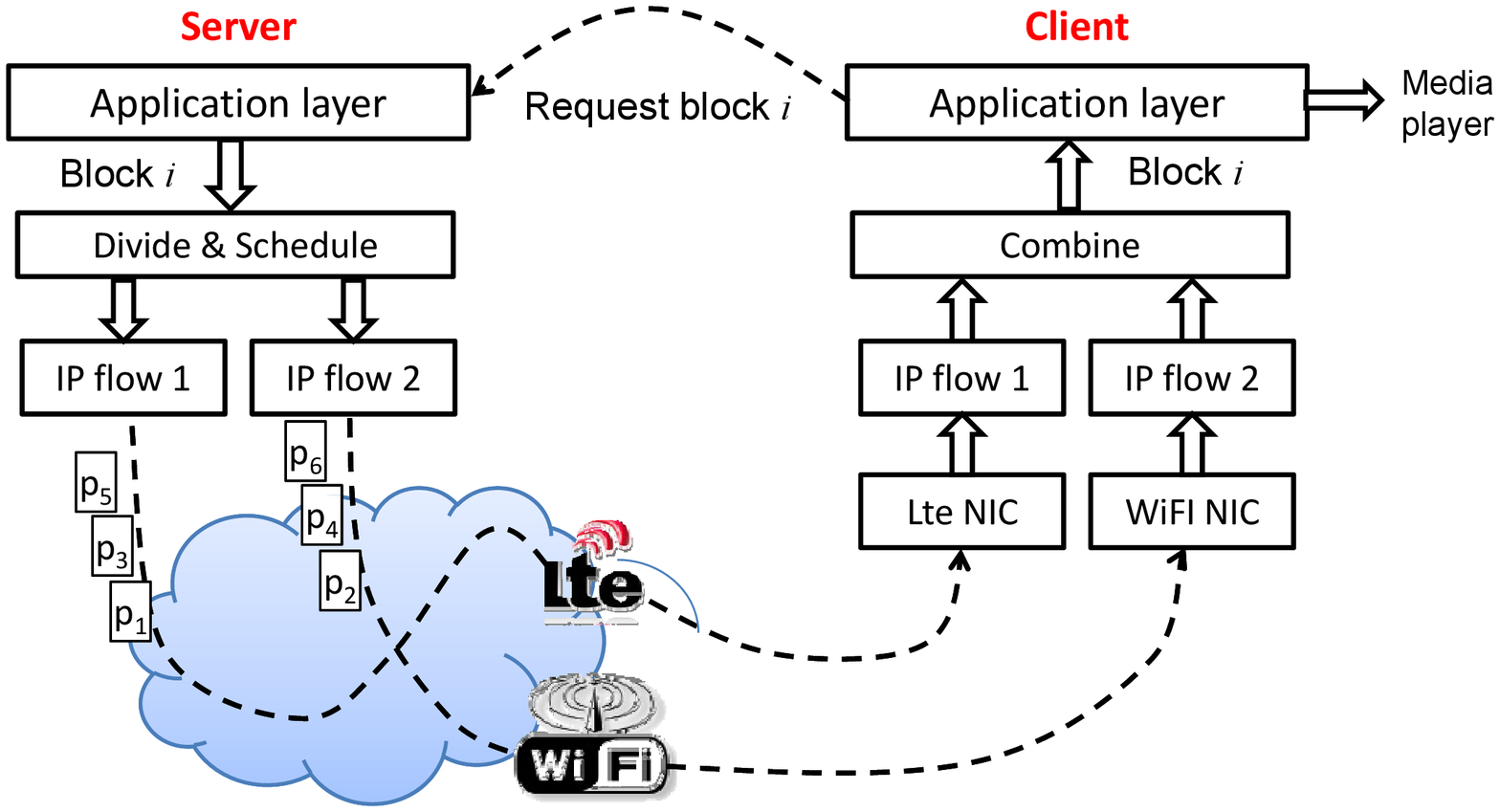}
  \caption{Streaming over multiple paths/interfaces.}\label{multipath_fig}
\end{figure}

The conventional approach in exploiting the path diversity in such scenarios is scheduling each packet to be delivered over one of the available paths. For instance, odd-numbered packets are assigned to path 1, and even-numbered packets are assigned to path 2. This approach requires a  deterministic and reliable setting, where each path is lossless and the capacity and end-to-end delay of each path is known. However, the wireless medium is intrinsically unreliable and time varying. Moreover, flow dynamics in other parts of the network may result in congestion on  a particular path. Therefore, the slowest or most unreliable path becomes the bottleneck. In order to compensate for that, the scheduler may add some redundancy by sending the same packet over multiple paths, which results in duplicate packet reception and loss of performance. There is a significant effort to use proper scheduling and control mechanisms to reduce these problems. For more information on this approach, generally known as MultiPath TCP (MPTCP), please refer to the works by Wischik \emph{et al.} \cite{MPTCP1, MPTCP2}, and IETF working draft on MPTCP \cite{MPTCP_IETF}.

\begin{figure}[htbp]
\centering
  \includegraphics[width=5in]{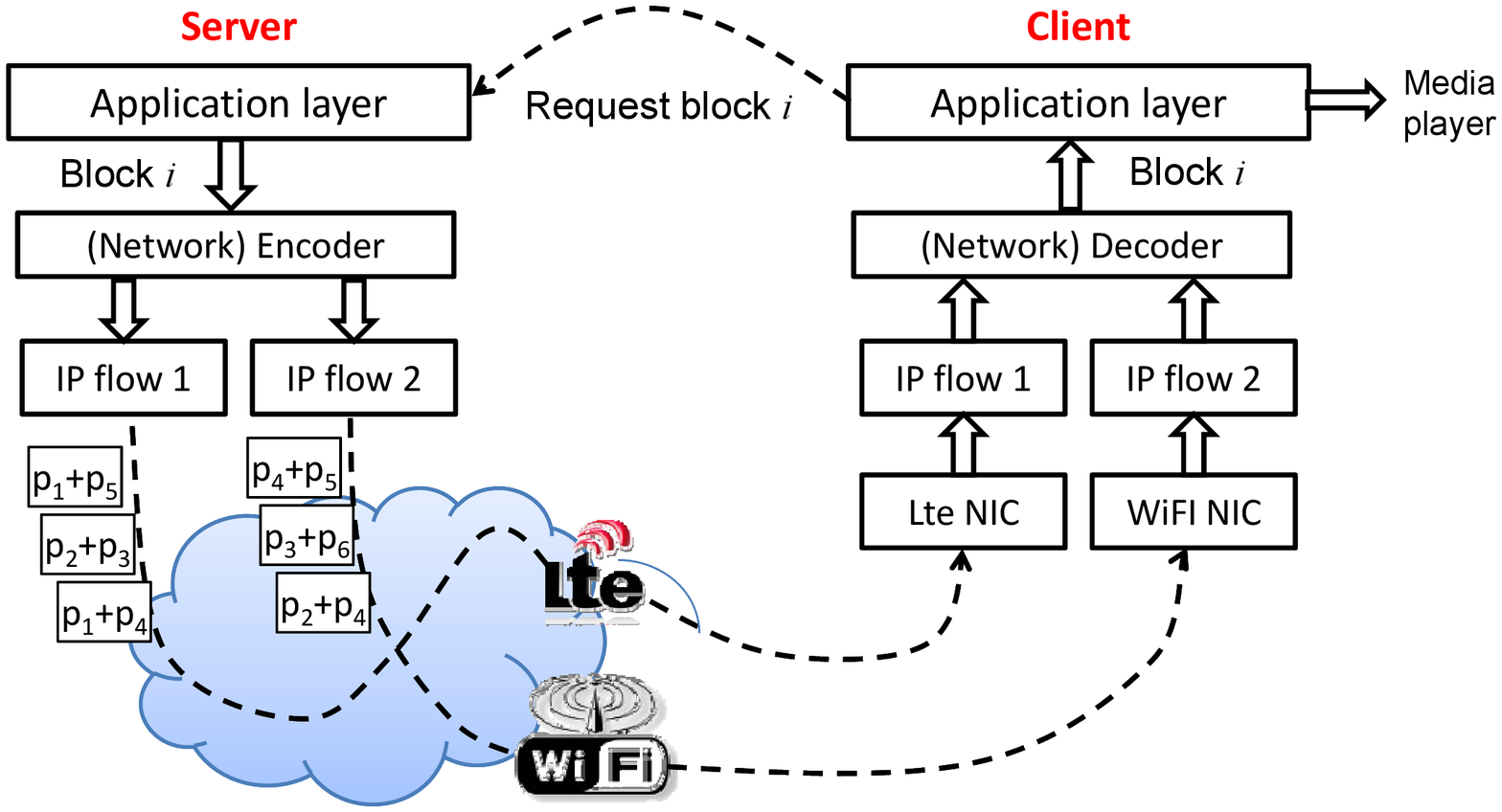}
  \caption{Streaming over multiple paths/interfaces using Network Coding.}\label{multipath_NC_fig}
\end{figure}

We propose random linear network coding (RLNC) to alleviate the duplicate packet reception problem. Figure \ref{multipath_NC_fig} illustrates an example. Here, instead of requesting a particular packet in block $i$, the receiver simply requests a random linear combination of all the packets in block $i$.  The coefficients of each combination are chosen uniformly at random from a Galois field of size $q$. The coded packets delivered to the receiver can be thought of as linear equations, where the unknowns are the original packets in block $i$. Block $i$ can be fully recovered by solving a system of linear equations if it is full rank. Note that we can embed the coding coefficients in the header of each coded packet so that the receiver can form the system of linear equations. For more implementation details, please refer to Sundararajan \emph{et al.} \cite{TCP_jaykumar}. It can be shown that if the field size $q$ is large enough, the received linear equations are linearly independent with very high probability \cite{RLNC}. Therefore, for recovering a block of $W$ packets, it is sufficient to receive $W$ linearly independent coded packets from different peers. Each received coded packet is likely to be independent of previous ones with probability $1-\delta(q),$ where $\delta(q)\rightarrow 0$ as $q \rightarrow \infty$.

By removing the notion of \emph{unique identity} assigned to each packet, network coding allows significant simplification of the scheduling and flow control tasks at the server. For instance, if one of the paths get congested or drops a few of the packets, the server may complete the block transfer by sending more coded packets over the other paths. Hence, the sender may perform TCP-like flow management and congestion control on each of the paths \emph{independently}. Therefore, network coding provides a mean to homogenize a technology-heterogeneous system as if the receiver only has single interface.

Note that such random linear coding does not introduce additional decoding delay for each block, since the frames in a block can only be played out when the whole block is received.   So there is no difference in delay whether the end-user received $W$ uncoded packets of the block or $W$ independent coded packets that can then be decoded.

In the following, we discuss the conditions under which we can convert a technology-heterogeneous multi-server system to a single-path single-server system. Consider a single user receiving a media file from various servers it is connected to. Assume that the media file is divided into blocks of $W$ packets. Each server sends  random linear combinations of the packets within the current block to the receiver. We assume that the linear combination coefficients are selected from a Galois field of large enough size, so that no redundant packet is delivered to the receiver. Moreover, we assume that the block size $W$ is small compared to the total length of the file, but large enough to ignore the boundary effects of moving from one block to the next.  For simplicity of the analysis, we assume in this work that the arrival process of packets from each server is a Poisson process. Since, network coding allows for independent flow control on each path, we may assume that the arrival process from each server is independent of other arrival processes. Moreover, since we can assume no redundant packet is delivered from different peers, we can combine the arrival processes into one Poisson process of the sum-rate $R$. Thus, our simplified model is just a single-server-single-receiver system.


We summarize the above discussions into the following Assumption, which is the key for development of the analytical results in the subsequent parts.

\begin{assumption}\label{prop:multipath_model}
Consider one or more servers streaming a single media file to a single client over $m$ independent path using random linear network coding. The packet delivery process over path $k$ is modeled as a Poisson process of rate $R_k$. The effective packet delivery process observed at the receiver is a Poisson process of rate $R = \sum_{k=1}^m R_k$.
\end{assumption}

Note that, for simplicity of the system model and analysis,  we are neglecting a few complexities associated with network coding approach such as the effect of field size, feedback imperfections, and other uncertainties

Assumption \ref{prop:multipath_model} provides the necessary tool for analyzing technology-heterogeneous multi-server systems as a single-server system. For instance, we can apply most of the results of \cite{JSAC} on fundamental delay-interruption trade-offs. This is essential for tractability of the analysis of cost-heterogeneous systems, which is the focus of the subsequent part.

\section{System Model and QoE Metrics}\label{sec:model}
We consider a media streaming system as follows. A single user is receiving a media stream of infinite size, from various servers or access points. The receiver first buffers $D$ packets from the beginning of the file, and then starts the playback at unit rate.

We assume that time is continuous, and the arrival process of packets from each server is a Poisson process independent of other arrival processes. Further, we assume that each server sends random linear combination of the packets in the source file. Therefore, by discussions of Section \ref{sec:coding}, no redundant packet is delivered from different servers. Therefore, we can combine the arrival processes of any subset of the servers into one Poisson process of rate equal to the sum of the rates from the corresponding servers (cf.  Assumption \ref{prop:multipath_model}).

There are two types of servers in the system: free \footnote{The contributions of this work still hold if both servers are costly with different access costs. Here, we normalize the access cost of the cheaper server to zero.} servers and the costly ones. There is no cost associated with receiving packets from a free server, but a unit cost is incurred per unit time the  costly servers are used. As described above, we can combine all the free servers into one free server from which packets arrive according to a Poisson process of rate $R_0$. Similarly, we can merge all of the costly servers into one costly server with effective rate of $R_c$. At any time $t$, the user has the option to use packets only from the free server or from both the free and the costly servers. In the latter case, the packets arrive according to a Poisson process of rate $$R_1 = R_0 + R_c.$$  The user's action at time $t$ is denoted by $u_t \in \{ 0,1 \}$, where $u_t =0$ if only the free server is used at time $t,$ while $u_t =1$ if both free and costly servers are used. We assume that the parameters $R_0$ and $R_1$ are known at the receiver. Figure \ref{MultiServer_model_fig} illustrates the system model.

\begin{figure}[htbp]
\centering
  \includegraphics[width=3in]{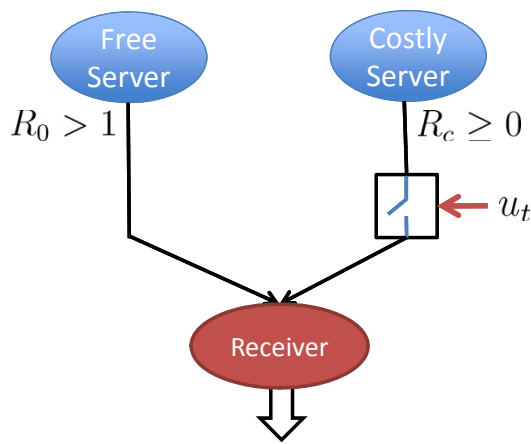}
  \caption{Streaming frow two classes of servers: costly and free.}\label{MultiServer_model_fig}
\end{figure}

 The dynamics of the receiver's buffer size (queue-length) $Q_t$ can be described as follows
\begin{equation}\label{buffer_het}
    Q_t =  D + N_t + \int_0^t u_\tau dN^c_\tau  - t,
\end{equation}
where $D$ is the initial buffer size, $N_t$ Poisson processes of rate $R_0$ and $N^c_t$ is a Poisson counter of rate $R_c$ which is independent of the process $N_t$. The last term correspond to the unit rate of media playback.

The user's association (control\footnote{Throughout the rest of this paper, we use the notion of control policy and association policy, interchangeably.}) policy is formally defined below.
\begin{definition}\label{policy_def}
[Control Policy] Let $$h_t = \{Q_s: 0\leq s\ \leq t\} \cup \{u_s: 0\leq s\ < t\}$$ denote the history of the buffer sizes and actions up to time $t$, and $\mathcal H$ be the set of all histories for all $t$. A \emph{deterministic association policy } denoted by $\pi$ is a mapping $\pi : \mathcal H \longmapsto \{ 0,1 \} $, where at any time $t$
$$u_t = \pi(h_t) = \left\{
             \begin{array}{ll}
               0, & \hbox{if only the free server is chosen,} \\
               1, & \hbox{if both servers are chosen.}
             \end{array}
           \right.
$$
Denote by $\Pi$ the set of all such control policies.
\end{definition}

We use the initial buffer size $D$, and interruption probability as QoE metrics. The interruption event occurs when the queue-length $Q_t$ reaches zero.
However, this event not only depends on the initial buffer size $D$, but also on the control policy $\pi$. To emphasize this dependency, we denote the interruption probability by
\begin{equation}\label{p_int_pi}
p^{\pi}(D) = \pr \{\tau_0 < \infty\},
\end{equation}
where
$$\tau_0 = \inf\{t: Q_t = 0\}.$$

\begin{definition}\label{feas_policy_def}
The policy $\pi$ is defined to be  $(D,\epsilon)$-\emph{feasible} if $p^{\pi}(D) \leq \epsilon$. The set of all such feasible policies is denoted by $\Pi(D,\epsilon)$.
\end{definition}

The third metric that we consider in this work is the expected cost of using the costly server which is proportional to the expected usage time of the costly server. For any $(D,\epsilon)$, the usage cost of a $(D,\epsilon)$-feasible policy $\pi$ is given by\footnote{Throughout this work, we use the convention that the cost of an infeasible policy is infinite.}
\begin{equation}\label{cost_def}
J^{\pi}(D,\e) = \E \Big[\int_0^{\tau_0} u_t dt\Big].
\end{equation}

The \emph{value function} or optimal cost function $V$ is defined as
\begin{equation}\label{value_def}
    V(D, \e) = \min_{\pi \in \Pi(D,\epsilon)} J^{\pi}(D,\e),
\end{equation}
and the optimal policy $\pi^*$ is defined as the optimal solution of the minimization problem in (\ref{value_def}).

In our model, the user expects to buffer no more than $D$ packets and have an interruption-free experience with probability higher than a desired level $1-\epsilon$. Note that there are trade-offs  among the interruption probability $\epsilon$, the initial buffer size $D$, and the usage cost. These trade-offs depend on the association policy as well as the system parameters $R_0$, $R_c$ and $F$.

Throughout the rest of this work, we study the case that $R_0 > 1$ and the file size $F$ goes to infinity,  since the control policies in this case take simpler forms. Moreover, the cost of such control policies  in this case provide an upper bound for the finite file size case. The following Lemma summarizes the main result from \cite{JSAC}, characterizing the fundamental trade-off between the interruption probability and initial buffering, for a single-server single-receiver system.

\begin{lemma}\label{pd_exact_lemma}
Consider a single receiver receiving a media stream from a single server according to a Poisson process of rate $R$. Let $D$ denote the initial buffer size before the playback, and set the playback rate to one. The  probability of interruption in media playback is given by
\begin{equation}\label{pd_exact}
 p(D) = e^{-I(R) D},
\end{equation}
where $I(R)$ is the largest root of $\gamma(r) = r + R(e^{-r} -1)$.
\end{lemma}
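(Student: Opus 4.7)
The plan is to reduce the lemma to a classical exponential-martingale / optional-stopping argument applied to the queue-length process $Q_t = D + N_t - t$, where $N_t$ is a Poisson counter of rate $R$. First, I would look for an exponent $r > 0$ such that $M_t := e^{-r Q_t}$ is a martingale. Applying the infinitesimal generator of the compound process (jumps of size $+1$ at rate $R$, continuous drift $-1$) to $f(q) = e^{-rq}$ gives
\begin{equation*}
\mathcal{A} f(q) = R\bigl[e^{-r(q+1)} - e^{-rq}\bigr] + r e^{-rq} = e^{-rq}\bigl[R(e^{-r}-1) + r\bigr] = e^{-rq}\,\gamma(r),
\end{equation*}
so $M_t$ is a martingale precisely when $\gamma(r) = 0$.

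Second, I would analyze the roots of $\gamma(r) = r + R(e^{-r}-1)$. One checks $\gamma(0) = 0$, $\gamma'(0) = 1 - R < 0$ by the standing assumption $R > 1$, and $\gamma''(r) = R e^{-r} > 0$. Strict convexity together with $\gamma(r) \to \infty$ as $r \to \infty$ forces $\gamma$ to have exactly one strictly positive root; this is $I(R)$. For this choice of $r$, $M_t$ is a genuine martingale on $[0,\tau_0]$, and it is bounded: since $Q_t \ge 0$ on $[0,\tau_0]$ and $I(R) > 0$, we have $0 \le M_{\tau_0 \wedge t} \le 1$ for all $t$.

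Third, I would apply the optional stopping theorem at the bounded stopping time $\tau_0 \wedge t$, obtaining $\E[M_{\tau_0 \wedge t}] = M_0 = e^{-I(R) D}$, and then pass to the limit $t \to \infty$. On $\{\tau_0 < \infty\}$, because the only decrements of $Q_t$ are continuous (jumps are $+1$), the queue reaches zero continuously, so $Q_{\tau_0} = 0$ and $M_{\tau_0 \wedge t} \to 1$. On $\{\tau_0 = \infty\}$, the strong law of large numbers for the Poisson process yields $Q_t / t \to R - 1 > 0$ almost surely, so $Q_t \to \infty$ and $M_{\tau_0 \wedge t} \to 0$. Bounded convergence then gives $e^{-I(R) D} = \pr\{\tau_0 < \infty\} \cdot 1 + \pr\{\tau_0 = \infty\} \cdot 0 = p(D)$.

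The main obstacle is the limit passage on the non-interruption event: one must confirm that $Q_t$ truly escapes to infinity on $\{\tau_0 = \infty\}$ so that $M_t \to 0$, not merely stays bounded away from zero. The strong-law argument above is clean because the drift $R-1$ is strictly positive, but if one wished to be fully rigorous about interchanging limit and expectation, the cleanest route is to use the uniform bound $M_{\tau_0 \wedge t} \in [0,1]$ to invoke bounded convergence, which sidesteps any integrability concerns. Everything else (the generator computation, convexity of $\gamma$, and optional stopping at a bounded stopping time) is routine.
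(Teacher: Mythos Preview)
Your argument is correct and is the standard exponential-martingale route to this ruin-probability formula. Note, however, that the paper itself does not supply a proof of this lemma: it is stated as a summary of a result from the authors' earlier work \cite{JSAC} and is not proved in the present paper. That said, the very technique you use---defining $Y_t = e^{-I(R)Q_t}$, verifying it is a martingale, and applying Doob's optional stopping theorem---is exactly what the paper invokes later in the proof of Lemma~\ref{stop_lemma} for the two-barrier variant, so your proposal is fully in the spirit of the paper's methods.
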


We first characterize the region of interest in the space of QoE metrics. In this region, a feasible control policy exists and is non-degenerate. We then use these results to design proper association policies.

\begin{theorem}\label{boundary_thm}
Let $(D,\e)$ be user's QoE requirement when streaming an infinite file from two servers. The arrival rate of the free server is given by $R_0 > 1$, and the total arrival rate when using the costly server is denoted by $R_1 > R_0$. Then

(a) For any $(D,\epsilon)$ such that $D\geq \frac{1}{I(R_0)}\log\big(\frac1\epsilon\big)$,
$$\min_{\pi \in \Pi} J^{\pi}(D,\e)  = 0.$$

(b) For any $(D,\epsilon)$ such that $D < \frac{1}{I(R_1)}\log\big(\frac1\epsilon\big)$,
$$\min_{\pi \in \Pi} J^{\pi}(D,\e) = \infty.$$

\end{theorem}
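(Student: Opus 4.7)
\emph{Proof plan.} The key observation is that the two extreme policies---$\pi_0$ which never activates the costly server ($u_t\equiv 0$) and $\pi_1$ which keeps it on at all times ($u_t\equiv 1$)---both reduce \eqref{buffer_het} to a single-server queue, so Lemma~\ref{pd_exact_lemma} applies directly and yields $p^{\pi_0}(D)=e^{-I(R_0)D}$ and $p^{\pi_1}(D)=e^{-I(R_1)D}$. Both parts will fall out from comparing an arbitrary $\pi\in\Pi$ against one of these two extremes.

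\emph{Part (a).} I would exhibit $\pi_0$ as a witness. The hypothesis $D\geq \frac{1}{I(R_0)}\log(1/\e)$ is equivalent to $p^{\pi_0}(D)\leq\e$, so $\pi_0\in\Pi(D,\e)$. Since $u_t\equiv 0$, its cost $J^{\pi_0}(D,\e)$ equals zero, and as $J^{\pi}\geq 0$ for every policy this immediately gives $\min_{\pi\in\Pi}J^{\pi}(D,\e)=0$.

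\emph{Part (b).} The substantive step is a pathwise monotonicity argument showing that $\pi_1$ attains the minimum interruption probability over all policies. I would realize the driving processes $N$ (rate $R_0$) and $N^c$ (rate $R_c$) on a common probability space; for any admissible $\pi$ write $Q^{\pi}_t=D+N_t+\int_0^t u_\tau\,dN^c_\tau-t$. Since $u_\tau\in\{0,1\}$, the pointwise inequality
\begin{equation*}
Q^{\pi_1}_t \;=\; D+N_t+N^c_t-t \;\geq\; Q^{\pi}_t \qquad \text{for all } t\geq 0
\end{equation*}
holds almost surely. Hence $\tau_0^{\pi}\leq \tau_0^{\pi_1}$ and $p^{\pi}(D)\geq p^{\pi_1}(D)=e^{-I(R_1)D}$. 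Under the hypothesis $D<\frac{1}{I(R_1)}\log(1/\e)$ this lower bound strictly exceeds $\e$, so $\Pi(D,\e)=\emptyset$, and the infeasible-policy convention forces $\min_{\pi\in\Pi}J^{\pi}(D,\e)=\infty$.

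The only delicate point is justifying the coupling: because $u_\tau$ depends on $h_\tau$, which is itself determined by the \emph{same} realizations of $N$ and $N^c$ used to define $Q^{\pi_1}$, the construction is internally consistent and the marginal law of $Q^{\pi_1}$ under the coupling is precisely the law of the always-both queue, so Lemma~\ref{pd_exact_lemma} legitimately supplies $p^{\pi_1}(D)=e^{-I(R_1)D}$. Once this is acknowledged, the remainder of part (b) is a one-line algebraic comparison.
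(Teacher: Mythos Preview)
Your proposal is correct and follows essentially the same argument as the paper: part (a) uses the degenerate policy $\pi_0\equiv 0$ together with Lemma~\ref{pd_exact_lemma} to produce a zero-cost feasible witness, and part (b) uses the degenerate policy $\pi_1\equiv 1$ to bound the interruption probability of every $\pi$ from below. The paper simply asserts that the buffer under any $\pi$ is stochastically dominated by that under $\pi_1$; your pathwise coupling on a common realization of $N$ and $N^c$ is exactly the natural way to make that domination explicit, so there is no substantive difference in approach.
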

\begin{proof}
Consider the degenerate policy $\pi_0 \equiv 0$. This policy is equivalent to a single-server system with arrival rate $R = R_0$. By Lemma \ref{pd_exact_lemma}, the policy $\pi_0$ is $(D,\epsilon)$-feasible for all $D\geq \frac{1}{I(R_0)}\log\big(\frac1\epsilon\big)$. Note that by (\ref{cost_def}) this policy does not incur any cost, which results in part (a).

Moreover, for all $(D,\epsilon)$ with $D < \frac{1}{I(R_1)}\log\big(\frac1\epsilon\big)$, there is no $(D,\epsilon)$-feasible policy. This is so since the buffer size under any policy $\pi$ is stochastically dominated by the one governed by the degenerate policy  $\pi_1 \equiv 1$. Hence,
$$p^{\pi}(D) \geq p^{\pi_1}(D) = \exp(-I(R_1) D) > \epsilon.$$
Using the convention of infinite cost for infeasible policies, we obtain the result in part (b).
\end{proof}

\begin{figure}[htbp]
\centering
  \includegraphics[width=3.5in]{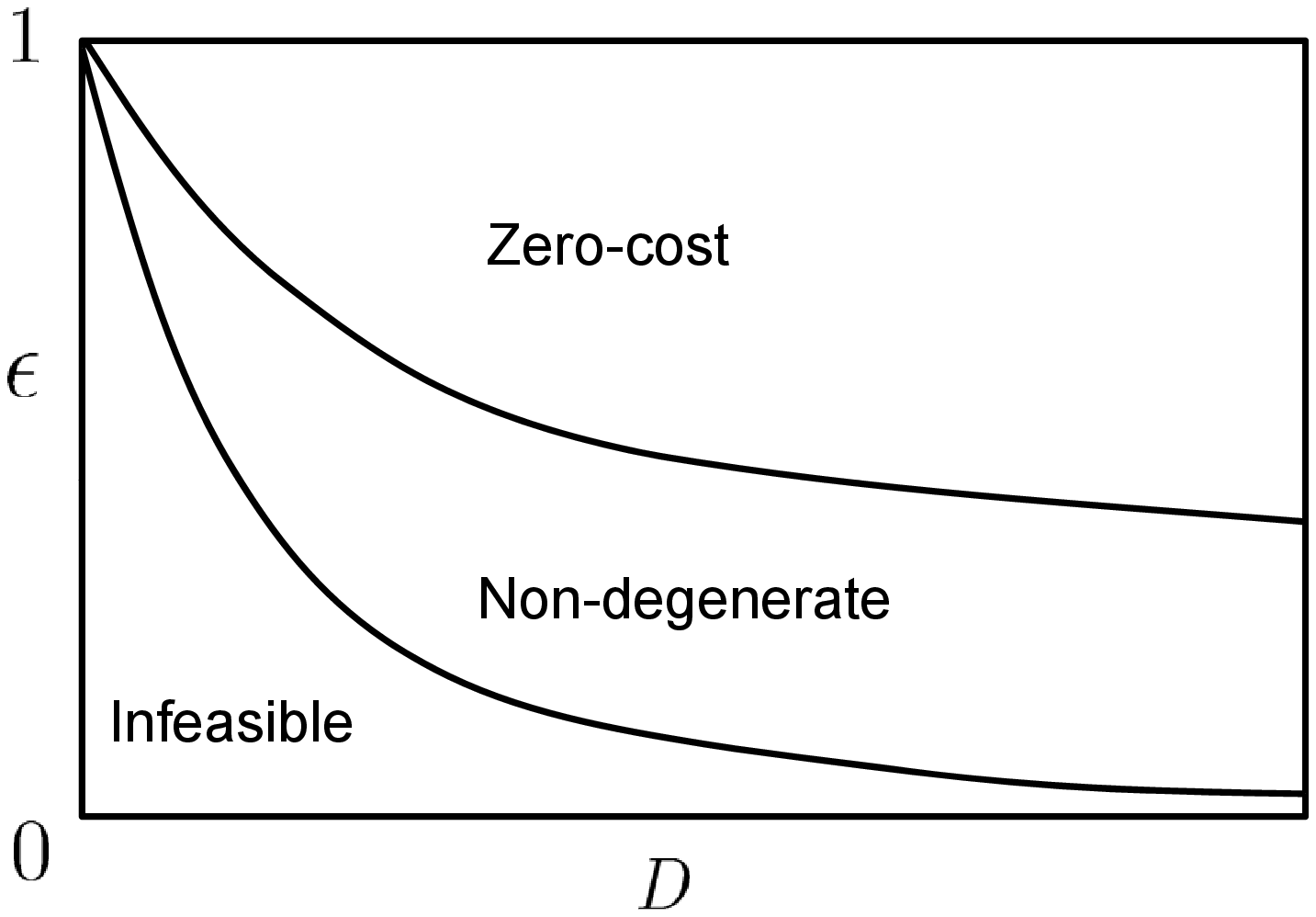}
  \caption{Non-degenerate, zero-cost and infeasible regions for QoE metrics $(D,\e)$.}\label{regions_fig}
\end{figure}

For simplicity of notation, let $\alpha_0 = I(R_0)$, and $\alpha_1 = I(R_1)$.  By Theorem \ref{boundary_thm} we focus on the region
\begin{equation}\label{S_def}
\mathcal R = \Big\{(D,\epsilon): \frac{1}{\alpha_1}\log\big(\frac1\epsilon\big) \leq D \leq \frac{1}{\alpha_0}\log\big(\frac1\epsilon\big)\Big\}
\end{equation}
to analyze the expected cost of various classes of control policies. Figure \ref{regions_fig} illustrates a conceptual example of this non-degenerate region as well as the zero-cost and infeasible regions.

\section{Design and Analysis of  Association Policies}\label{sec:policies}
In this part, we propose several classes of parameterized control policies. We first characterize the range of the parameters for which the association policy is feasible for a given initial buffer size $D$ and the desired level of interruption probability $\epsilon$. Then, we try to choose the parameters such that the expected cost of the policy is minimized. All of the proofs of the main theorems are included in Appendix \ref{app:policy_analysis}.

\subsection{Off-line Policy}
Consider the class of policies where the decisions are made off-line before starting the media streaming. In this case, the arrival process is not observable by the decision maker. Therefore, the user's decision space reduces to the set of deterministic functions $u: \R \rightarrow \{0,1\}$, that maps time into the action space.

\begin{theorem}\label{offline_form_thm}
Let the cost of a control policy be defined as in (\ref{cost_def}). In order to find a minimum-cost off-line policy, it is sufficient to consider policies of the form:
\begin{equation}\label{offline_form}
    \pi(h_t) = u_t = \left\{
            \begin{array}{ll}
              1, & \hbox{if $t \leq t_s$} \\
              0, & \hbox{if $t > t_s,$}
            \end{array}
          \right.
\end{equation}
which parameterized are by a single parameter $t_s \geq 0$.
\end{theorem}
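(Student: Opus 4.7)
The intuition is that for an offline (non-adaptive) policy, arrivals from the costly server are most valuable when the buffer is small and in danger of hitting zero --- which happens near the start of streaming. Front-loading the costly usage, i.e., the threshold form in (\ref{offline_form}), should therefore be optimal. My plan is to make this rigorous through a pathwise coupling that reduces the comparison of two offline policies to the comparison of their deterministic cumulative-usage curves $M^u(t)=\int_0^t u_s\,ds$.

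For any deterministic offline $u$, the process $\int_0^t u_s\,dN^c_s$ is a non-homogeneous Poisson process with mean $R_c M^u(t)$, and by the standard time-change representation it is equal in law to $P(R_c M^u(t))$ for a unit-rate Poisson $P$ independent of $N$. The plan is to realize all offline policies simultaneously on the common randomness $(N,P)$ via
\[
Q^u_t = D + N_t + P(R_c M^u(t)) - t,
\]
which preserves the marginal law of each queue. Since $P$ is non-decreasing, $M^{u_1}(t)\le M^{u_2}(t)$ pointwise implies $Q^{u_1}_t\le Q^{u_2}_t$ and hence $\tau_0^{u_1}\le \tau_0^{u_2}$ pathwise. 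For any $u$ with total usage $T=\int_0^\infty u_s\,ds$, the bound $M^u(t)\le \min(t,T)=M^{\pi_T}(t)$ is immediate, so the threshold policy $\pi_T$ pathwise dominates $u$: it has smaller interruption probability and thus preserves feasibility. Applying the same coupling between two thresholds shows $p^{\pi_{t_s}}(D)$ is non-increasing and $J^{\pi_{t_s}}=\E[\min(\tau_0^{\pi_{t_s}},t_s)]$ is non-decreasing in $t_s$; define $t_s^\star$ as the smallest $t_s$ for which $\pi_{t_s}$ is $(D,\epsilon)$-feasible, so that $t_s^\star \le T$ for every feasible $u$.

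The main step, and I expect the main obstacle, is to show $J^{\pi_{t_s^\star}} \le J^u$ for every feasible offline $u$, thus actually reducing the optimization to the threshold family. The direct coupling yields only $J^u \le J^{\pi_T}$ via $M^u(\tau_0^u)\le M^{\pi_T}(\tau_0^{\pi_T})$, a comparison against the wrong (too large) threshold. My plan to close the gap is a localized swap-and-shrink argument: given $u$ with $u(a)=0<u(b)=1$ for some $a<b$, swapping a small mass from $[b,b+\delta]$ to $[a,a+\delta]$ front-loads $M^u$, strictly decreases the interruption probability, and creates slack in the feasibility constraint; this slack can then be absorbed by shrinking the total usage $T$ toward $t_s^\star$, reducing the cost. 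Iterating pushes $u$ onto the threshold family without ever increasing cost, so the infimum over offline policies is attained within the one-parameter family (\ref{offline_form}). The delicate point is ensuring the swap step does not transiently increase the expected cost through the rare early-interruption event; I would bound that contribution by $\epsilon$ times a uniform constant and verify it is dominated by the cost savings from the subsequent shrink.
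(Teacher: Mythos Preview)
Your time-change coupling is a rigorous version of exactly the argument the paper gives. The paper's proof is two sentences: for any offline $\pi$, let $\pi'$ be the threshold policy with $t_s = J^{\pi}$; the two policies have the same cost by construction, and the buffer under $\pi$ is stochastically dominated by that under $\pi'$ ``because the policy $\pi'$ counts the arrivals from the costly server earlier, and the arrival process is stationary,'' so $\pi'$ is feasible whenever $\pi$ is. Your representation $Q^u_t = D + N_t + P(R_c M^u(t)) - t$ and the pointwise inequality $M^u(t)\le \min(t,T)=M^{\pi_T}(t)$ are precisely the way to make that dominance claim rigorous, and this first half of your plan already matches the paper's proof in full.

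The swap-and-shrink program in your second half is not in the paper and is not needed for the theorem as the paper understands and uses it. The paper effectively identifies the offline cost with the deterministic total usage---see the sentence following Theorem~\ref{offline_range_thm}, ``obtaining the optimal off-line policy is equivalent to finding the smallest $t_s$ for which the policy is still feasible''---so that $t_s = J^\pi$ coincides with the total usage $T$, the ``same cost'' claim is immediate, and dominance follows from $M^{\pi'}(t)=\min(t,T)\ge M^\pi(t)$. Your worry that the direct coupling only yields $J^u \le J^{\pi_T}$ under the $\tau_0$-truncated cost (\ref{cost_def}) is a legitimate technical point, but the paper simply does not engage with it; the sufficiency claim is established at the level of ``same total usage, no worse feasibility,'' and the subsequent analysis (Theorem~\ref{offline_range_thm}) only uses $t_s^*$ as a benchmark upper bound anyway. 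So your extra machinery, while a reasonable attempt to close a gap you perceived, goes well beyond what the paper does or needs, and---as you yourself flag---is not yet airtight.
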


\begin{proof}
In general, any off-line policy $\pi$ consists of multiple intervals in which the costly server is used. Consider an alternative policy $\pi'$  of the form of (\ref{offline_form}) where $t_s = J^{\pi}$. By definition of the cost function in (\ref{cost_def}), the two policies incur the same cost. Moreover, the buffer size process under policy $\pi$ is stochastically dominated by the one under policy $\pi'$, because the policy $\pi'$ counts the arrivals from the costly server earlier, and the arrival process is stationary. Hence, the interruption probability of $\pi'$ is not larger than that of $\pi$. Therefore, for any off-line policy, there exists another off-line policy of the form given by (\ref{offline_form}).
\end{proof}

\begin{theorem}\label{offline_range_thm}
Consider the class of off-lines policies of the form (\ref{offline_form}). For any $(D,\e) \in \mathcal R$, the policy $\pi$ defined in (\ref{offline_form}) is feasible if
\begin{equation}\label{ts_range}
    t_s \geq t_s^* = \frac{R_0}{R_1 - R_0} \bigg[ \frac{1}{\alpha_0} \log\Big( \frac{1}{\e - e^{-\alpha_1 D}}  \Big) - D \bigg].
\end{equation}
\end{theorem}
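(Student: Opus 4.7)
The plan is to decompose the interruption event by whether it happens before or after the switching time $t_s$, bound each piece separately, and then solve for $t_s$. Write $p^\pi(D) = \pr(\tau_0 \leq t_s) + \pr(\tau_0 > t_s,\, \tau_0 < \infty)$. On $[0,t_s]$ the buffer $Q_t$ is driven by a Poisson process of rate $R_1$ starting at $D$, so the event $\{\tau_0 \le t_s\}$ is a subset of the event ``the rate-$R_1$ process ever hits zero starting from $D$''. By Lemma \ref{pd_exact_lemma} applied with $R=R_1$, this gives $\pr(\tau_0 \leq t_s) \leq e^{-\alpha_1 D}$.

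For the second term, I would use the strong Markov property at time $t_s$: conditioned on $\{\tau_0 > t_s\}$ and on $Q_{t_s}$, the post-$t_s$ evolution is a rate-$R_0$ Poisson-driven buffer starting from the positive value $Q_{t_s}$, so by Lemma \ref{pd_exact_lemma} the conditional probability of ever hitting zero after $t_s$ equals $e^{-\alpha_0 Q_{t_s}}$. Therefore
\begin{equation*}
\pr(\tau_0 > t_s,\, \tau_0 < \infty) \;=\; \E\bigl[e^{-\alpha_0 Q_{t_s}}\,\mathbf{1}(\tau_0 > t_s)\bigr] \;\leq\; \E\bigl[e^{-\alpha_0 Q_{t_s}}\bigr],
\end{equation*}
where I extend $Q_{t_s} := D + N^{R_1}_{t_s} - t_s$ to the whole sample space (the relaxation is legitimate because the integrand is nonnegative).

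Now I evaluate $\E[e^{-\alpha_0 Q_{t_s}}]$ using the Poisson moment generating function and simplify with the defining identity $\alpha_0 + R_0(e^{-\alpha_0}-1) = 0$ (equivalently $1-e^{-\alpha_0} = \alpha_0/R_0$) that characterizes $\alpha_0 = I(R_0)$:
\begin{equation*}
\E\bigl[e^{-\alpha_0 Q_{t_s}}\bigr] \;=\; e^{-\alpha_0 D}\, e^{t_s[\alpha_0 + R_1(e^{-\alpha_0}-1)]} \;=\; e^{-\alpha_0 D}\, e^{t_s(R_1-R_0)(e^{-\alpha_0}-1)} \;=\; \exp\!\left(-\alpha_0 D - \frac{\alpha_0(R_1-R_0)}{R_0}\,t_s\right).
\end{equation*}
Combining the two bounds yields $p^\pi(D) \leq e^{-\alpha_1 D} + \exp(-\alpha_0 D - \alpha_0(R_1-R_0)t_s/R_0)$. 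Imposing $p^\pi(D) \leq \epsilon$, isolating the exponential in $t_s$, and taking logarithms (using that $\epsilon - e^{-\alpha_1 D} > 0$ because $(D,\epsilon) \in \mathcal{R}$) gives exactly the threshold $t_s^*$ stated in (\ref{ts_range}).

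The conceptual work is essentially in setting up the two-phase decomposition and invoking Lemma \ref{pd_exact_lemma} correctly on each phase; the rest is a one-line Poisson MGF calculation. The only non-mechanical step will be recognizing that the substitution $1 - e^{-\alpha_0} = \alpha_0/R_0$ coming from the definition of the interruption exponent is what collapses the MGF exponent into the clean prefactor $\alpha_0(R_1-R_0)/R_0$ that appears in $t_s^*$; without that identity the expression would not match the stated formula.
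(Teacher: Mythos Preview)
Your proposal is correct and follows essentially the same approach as the paper's proof: both split the interruption event at the switching time $t_s$, bound the pre-switch piece by $e^{-\alpha_1 D}$ via Lemma~\ref{pd_exact_lemma}, and bound the post-switch piece by $\E[e^{-\alpha_0 Q_{t_s}}]$ computed from the Poisson distribution, simplifying via the identity $e^{-\alpha_0}-1 = -\alpha_0/R_0$. The only cosmetic differences are that the paper writes the expectation as an explicit sum over Poisson probabilities and phrases the decomposition as a union bound, whereas you invoke the MGF directly and use the exact disjoint partition $\{\tau_0\le t_s\}\cup\{\tau_0>t_s,\tau_0<\infty\}$.
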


Note that obtaining the optimal off-line policy is equivalent to finding the smallest $t_s$ for which the policy is still feasible. Therefore, $t_s^*$ given in (\ref{ts_range}) provides an upper bound on the minimum cost of an off-line policy. Observe that $t^*_s$ is almost linear in $D$ for all $(D,\e)$ that is not too close to the lower boundary of region $\mathcal R$. As $(D,\e)$ gets closer to the boundary, $t^*_s$ and the expected cost grows to infinity, which is in agreement with Theorem \ref{boundary_thm}. In this work, we pick $t^*_s$ as a benchmark for comparison to other policies that we  present next.

\subsection{Online Safe Policy}
Let us now consider the class of online policies where the decision maker can observe the buffer size history. Inspired by the structure of the optimal off-line policies, we first focus on a \emph{safe} control policy in which, in order to avoid interruptions, the costly server is used  at the beginning  until the buffer size reaches a certain threshold, after which the costly server is never used. This policy is formally defined below.

\begin{definition}\label{safe_def}
The online safe policy $\pi^S$ parameterized by the threshold value $S$ is given by
\begin{equation}\label{safe_policy}
    \pi^S(h_t) = \left\{
                   \begin{array}{ll}
                     1, & \hbox{if $t \leq \tau_s$} \\
                     0, & \hbox{if $t > \tau_s$,}
                   \end{array}
                 \right.
\end{equation}
where $\tau_S = \inf \{t \geq 0: Q_t \geq S\}$.
\end{definition}

\begin{theorem}\label{safe_thm}
Let $\pi^S$ be the safe policy defined in Definition \ref{safe_def}.
For any $(D,\e) \in \mathcal R$, the safe policy is feasible if
\begin{equation}\label{S_range}
    S \geq S^* = \frac{1}{\alpha_0} \log\Big( \frac{1}{\e - e^{-\alpha_1 D}}  \Big).
\end{equation}
Moreover,
\begin{eqnarray*}\label{safe_cost}
    \min_{S\geq S^*} J^{\pi^S}(D,\e)\!\!\!\! &=& \!\!\!\! J^{\pi^{S^*}}(D,\e)
   =  \frac{1}{R_1 - 1} \bigg[ \frac{1}{\alpha_0} \log\Big( \frac{1}{\e - e^{-\alpha_1 D}}  \Big) - D +\xi\bigg],
\end{eqnarray*}
where $\xi \in [0,1)$.
\end{theorem}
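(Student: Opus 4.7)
My plan is to prove the three claims (feasibility range for $S$, attainment of the minimum at $S^*$, and the explicit cost formula) in that order, via a two-phase decomposition of the trajectory and an optional stopping argument. For feasibility, I would apply the strong Markov property at the switching time $\tau_S = \inf\{t : Q_t \geq S\}$ and decompose
$$p^{\pi^S}(D) = \pr(\tau_0 < \tau_S) + \E\bigl[\mathbf{1}_{\{\tau_S < \tau_0\}}\, e^{-\alpha_0 Q_{\tau_S}}\bigr],$$
where the second term invokes Lemma~\ref{pd_exact_lemma} on the free-server dynamics starting from the post-jump level $Q_{\tau_S} \geq S$. Bounding $\pr(\tau_0 < \tau_S) \leq e^{-\alpha_1 D}$ via Lemma~\ref{pd_exact_lemma} again (removing the upper barrier only increases the interruption probability) and using $e^{-\alpha_0 Q_{\tau_S}} \leq e^{-\alpha_0 S}$ yields $p^{\pi^S}(D) \leq e^{-\alpha_1 D} + e^{-\alpha_0 S}$, which is $\leq \e$ precisely when $S \geq S^*$.

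Next I would show that $S \mapsto J^{\pi^S}(D,\e)$ is non-decreasing, so that its infimum over the feasible set is attained at $S^*$. Since the cost under $\pi^S$ equals $\tau_0 \wedge \tau_S$, a pathwise coupling on the underlying Poisson processes gives $\tau_0 \wedge \tau_{S_1} \leq \tau_0 \wedge \tau_{S_2}$ sample-path-wise for $S_1 \leq S_2$: enlarging the threshold postpones $\tau_S$ and, through a longer exposure to the faster rate-$R_1$ dynamics, makes $\tau_0$ stochastically larger as well. Taking expectations transfers the inequality to $J^{\pi^S}$.

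For the explicit cost at $S = S^*$, I would apply the optional stopping theorem to the compensated martingale $M_t = Q_t - (R_1 - 1)t$ (valid throughout phase one) at the stopping time $\tau := \tau_0 \wedge \tau_{S^*}$, giving $\E[\tau] = (\E[Q_\tau] - D)/(R_1 - 1)$. Because the buffer moves by unit upward jumps and continuous unit drain, $Q_\tau = 0$ on $\{\tau_0 < \tau_{S^*}\}$ while $Q_\tau = S^* + \eta$ with overshoot $\eta \in [0,1)$ on $\{\tau_{S^*} < \tau_0\}$, so defining $\xi := \E[Q_\tau] - S^*$ reproduces the claimed cost expression. The main technical hurdle is the sharp assertion $\xi \in [0,1)$: the upper bound follows at once from $\E[Q_\tau] < S^* + 1$ because $\eta < 1$ almost surely, whereas the lower bound is delicate and requires balancing the overshoot contribution $\E[\eta\, \mathbf{1}_{\{\tau_{S^*} < \tau_0\}}]$ against the mass lost on $\{\tau_0 < \tau_{S^*}\}$. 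At $S = S^*$ I would control this mass via the exact identity $\pr(\tau_0 < \tau_{S^*}) = e^{-\alpha_1 D} - e^{-\alpha_1 S^*}\, \E[e^{-\alpha_1 \eta}\,\mathbf{1}_{\{\tau_{S^*} < \tau_0\}}]$, obtained from optional stopping on the exponential martingale $e^{-\alpha_1 Q_t}$ (whose martingality is exactly the characterization of $\alpha_1$ as the largest root of $\gamma$ in Lemma~\ref{pd_exact_lemma}).
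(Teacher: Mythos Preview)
Your approach mirrors the paper's almost exactly: feasibility via a two-phase union bound (phase-one interruption bounded by $e^{-\alpha_1 D}$, phase-two by $e^{-\alpha_0 S}$ using Lemma~\ref{pd_exact_lemma} and $Q_{\tau_S}\ge S$), monotonicity of the cost in $S$ to pin the minimum at $S^*$, and Wald/optional stopping on $Q_t-(R_1-1)t$ for the explicit formula.

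The one substantive difference is in the cost computation, and here you are actually \emph{more} careful than the paper---to your own detriment. The paper simply writes $J^{\pi^S}(D,\e)=\E[\tau_{S}]$, i.e., it tacitly ignores the event $\{\tau_0<\tau_S\}$ and applies Wald at the unstopped threshold time $\tau_{S^*}$. Under that reading, $\E[Q_{\tau_{S^*}}]=S^*+\xi$ with $\xi$ equal to the Poisson overshoot, so $\xi\in[0,1)$ is immediate from the unit jump size. Your formulation correctly takes the cost to be $\E[\tau_0\wedge\tau_{S^*}]$, which forces $\xi=\E[Q_{\tau_0\wedge\tau_{S^*}}]-S^*$ to absorb the mass on $\{\tau_0<\tau_{S^*}\}$ where $Q=0$; the lower bound $\xi\ge 0$ then becomes a genuine inequality that your exponential-martingale identity does not, by itself, establish. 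In short: the paper's $\xi\in[0,1)$ is really just the overshoot bound under a mild abuse (dropping the $\tau_0$ truncation), and if you adopt that same simplification the difficulty you flagged disappears.
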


Let us now compare the online safe policy $\pi^{S^*}$ with the off-line policy defined in (\ref{offline_form}) with parameter $t_s^*$ as in (\ref{ts_range}). We observe that the cost of the online safe policy is almost proportional to that of the off-line policy, where the cost ratio of the off-line policy to that of the online safe policy is given by
$$\frac{R_0(R_1-1)}{R_1-R_0} = 1 + \frac{R_1(R_0-1)}{R_1-R_0} > 1.$$
Note that the structure of both policies is the same, i.e, both policies use the costly server for a certain period of time and then switch back to the free server. As suggested here, the advantage of observing the buffer size allows the online policies to avoid excessive use of the costly server when there are sufficiently large number of arrivals from the free server. In the following, we present another class of online policies.

\subsection{Online Risky Policy}
In this part, we study a class of online policies where the costly server is used only if the buffer size is below a certain threshold. We call such policies ``risky'' as the risk of interruption is spread out across the whole trajectory, unlike the ``safe'' policies.  Further, we constrain risky policies to possess the property that the action at a particular time should only depend on the buffer size at that time, i.e., such policies are \emph{stationary Markov} with respect to buffer size as the state of the system. The risky policy is formally defined below.

\begin{definition}\label{risky_def}
The online risky policy $\pi^T$ parameterized by the threshold value $T$ is given by
\begin{equation}\label{risky_policy}
    \pi^T(h_t) = \pi^T(Q_t)= \left\{
                   \begin{array}{ll}
                     1, & \hbox{if $0< Q_t  <  T$} \\
                     0, & \hbox{otherwise.}
                   \end{array}
                 \right.
\end{equation}
\end{definition}

\begin{theorem}\label{risky_thm}
Let $\pi^T$ be the risky policy defined in Definition \ref{risky_def}.
For any $(D,\e) \in \mathcal R$, the policy $\pi^T$ is feasible if the threshold $T$ satisfies
\begin{equation}\label{threshold}
    T \geq T^* = \left\{
                   \begin{array}{ll}
                     \frac{1}{\a _1-\a _0}\big[\log\big(\frac{\beta}{\e}\big) - \a_0 D \big], & \hbox{if $D \geq \bar D$,} \\
                     \frac{1}{\a_1}\log\Big(\frac{\e + \beta(1-e^{-\a_1 D}) - 1}{\e - e^{-\a_1 D}}\Big), & \hbox{if $D \leq \bar D$,}
                   \end{array}
                 \right.
\end{equation}
where $\beta = \frac{\a_1}{\a_0 (1-\frac{\a_0}{2})}$ and $\bar D = \frac{1}{\a_1}\log\big(\frac{\beta}{\e}\big)$.

\end{theorem}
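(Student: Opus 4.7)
The plan is to compute $p^{\pi^T}(D)=\pr(\tau_0<\infty\mid Q_0=D)$ explicitly as a function of $T$ and then invert the feasibility constraint $p^{\pi^T}(D)\le\epsilon$ to isolate the smallest admissible threshold. Setting $f(q)=\pr(\tau_0<\infty\mid Q_0=q)$ and writing down the infinitesimal generator of the controlled jump process $Q_t$ (unit-rate continuous decrease plus Poisson arrivals of state-dependent intensity), I would first derive the delay differential equation
\begin{equation*}
f'(q)=R(q)\,\bigl[f(q+1)-f(q)\bigr],
\end{equation*}
with $R(q)=R_1$ on $(0,T)$ and $R(q)=R_0$ on $(T,\infty)$, subject to $f(0^+)=1$ and $f(q)\to 0$ as $q\to\infty$ (the latter justified by the positive drift $R_0-1>0$ above $T$).

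In the upper region $q\ge T$, the risky policy degenerates to a single-server system with rate $R_0$, so the strong Markov property and Lemma~\ref{pd_exact_lemma} (applied after translating the absorbing level from $0$ to $T$) give $f(q)=f(T)\,e^{-\alpha_0(q-T)}$. In the interior of the lower region, roughly $0<q<T-1$, the DDE is satisfied by the ansatz $f(q)=A\,e^{-\alpha_1 q}+B$, because $\alpha_1=I(R_1)$ is a root of $\gamma$; the boundary condition $f(0^+)=1$ fixes $A+B=1$.

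The substantive work is the boundary-layer analysis on $q\in(T-1,T)$, where the shifted argument $q+1$ lies in the upper region and the DDE becomes the inhomogeneous linear ODE
\begin{equation*}
f'(q)+R_1 f(q)=R_1 f(T)\,e^{-\alpha_0(q+1-T)}.
\end{equation*}
I would solve this by the integrating-factor method and then impose continuity of $f$ at both endpoints, $q=T$ (matching the upper region) and $q=T-1$ (matching the interior ansatz). Using the defining identity $R_0(1-e^{-\alpha_0})=\alpha_0$ together with the associated expansion $1-e^{-\alpha_0}\approx\alpha_0(1-\alpha_0/2)$ to simplify, the resulting algebraic system collapses to the key matching relation $f(T)=\beta\,e^{-\alpha_1 T}$ with $\beta=\alpha_1/[\alpha_0(1-\alpha_0/2)]$. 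This is the main technical obstacle, since the global solution on $(0,T)$ is genuinely piecewise and cannot be written as a single exponential plus constant.

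Finally, I would assemble $f(D)$ and invert. For $D\ge T$, which is equivalent to $D\ge\bar D$ by construction, $f(D)=\beta\,e^{-\alpha_1 T}\,e^{-\alpha_0(D-T)}$; equating to $\epsilon$ and solving for $T$ yields the first line of~(\ref{threshold}). For $D<T$, equivalently $D<\bar D$, the lower-region expression $f(D)=A\,e^{-\alpha_1 D}+(1-A)$ with $A$ determined by the matching relation $f(T)=\beta\,e^{-\alpha_1 T}$ gives, upon inversion, the second line. The crossover happens exactly at $D=\bar D$, where $T^{*}=D$. That the feasibility region is the one-sided inequality $T\ge T^{*}$ then follows from monotonicity of $f(D)$ in $T$: enlarging the threshold prolongs the faster-arrival regime and strictly decreases the interruption probability.
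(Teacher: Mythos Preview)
Your overall strategy and your splitting into the regions $q\ge T$ and $q<T$ are sound, and the formula $f(q)=f(T)e^{-\alpha_0(q-T)}$ for $q\ge T$ agrees with the paper. The gap is at the step you yourself label the main technical obstacle. You claim the boundary-layer matching ``collapses to the key matching relation $f(T)=\beta e^{-\alpha_1 T}$,'' but you reach this only after invoking the \emph{expansion} $1-e^{-\alpha_0}\approx\alpha_0(1-\alpha_0/2)$. An equality cannot follow from an approximation; what one can obtain is an inequality, and indeed the paper proves $p^{\pi^T}(T)\le\beta e^{-\alpha_1 T}$, not equality. The constant $\beta$ is not a structural constant of the exact ruin probability but the residue of bounding the unit-size overshoot $Q_{\tau_T}-T\in[0,1)$ via $1-x\le e^{-x}\le 1-x+x^2/2$. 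Since the theorem states only a sufficient condition, an upper bound on $f$ is exactly what is needed, but your write-up frames the argument as an exact computation followed by an exact inversion, which it is not; the direction of the inequality must be tracked throughout. A secondary issue: your interior ansatz $Ae^{-\alpha_1 q}+B$ on $(0,T-1)$ is not justified by the DDE alone, because on $(T-2,T-1)$ the forcing $f(q+1)$ already lies in the boundary layer and is not of that form, so the layers cascade.

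The paper bypasses the DDE entirely. It applies Doob's optional stopping to the martingale $e^{-\alpha_1 Q_t}$ at the two-barrier stopping time $\min\{\tau_0,\tau_T\}$ to obtain (their Lemma~\ref{stop_lemma}) the exact identity
\[
\pr(\tau_0>\tau_T\mid Q_0=D)=\frac{1-e^{-\alpha_1 D}}{1-\E[e^{-\alpha_1 Q_{\tau_T}}\mid \tau_0>\tau_T]},
\]
regenerates at $\tau_T$, and then bounds the overshoot expectations using $Q_{\tau_T}-T\in[0,1)$ together with the elementary inequalities above. This yields upper bounds that invert directly to the two branches of~(\ref{threshold}). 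Your argument is repairable along these lines: keep the upper-region formula, replace the DDE analysis below $T$ by the martingale/optional-stopping computation, and carry $f(T)\le\beta e^{-\alpha_1 T}$ as a bound rather than an identity.
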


Theorem \ref{risky_thm} facilitates the design of risky policies with a single-threshold structure, for any desired initial buffer size $D$ and interruption probability $\e$. For a fixed $\e$, when $D$ increases, $T^*$ (the design given by Theorem \ref{risky_thm}) decreases to zero. On the other hand, if $D$ decreases to $\frac{1}{\a_1}\log\big(\frac1\e\big)$ (the boundary of $\mathcal R$), the threshold $T^*$ quickly increases to infinity, i.e., the policy does not switch back to the free server unless a sufficiently large number of packets is buffered. Figure \ref{T*D_figure} plots $T^*$ and $D$ as a function of $D$ for a fixed $\e$. Observe that, for large range of $D$, $T^* \leq D$, i.e., the costly server is not initially used. In this range, owing to the positive drift of $Q_t$, the probability of ever using the costly server exponentially decreases in $(D - T^*)$.

\begin{figure}[htbp]
\centering
  \includegraphics[width=4in]{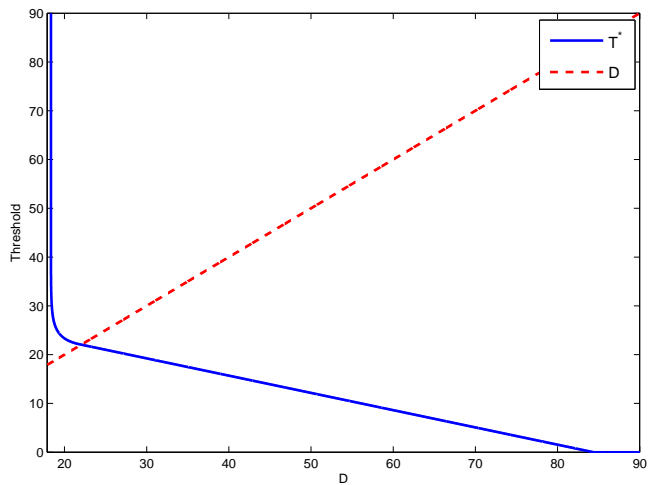}
  \caption{The switching threshold of the online risky policy as a function of the initial buffer size for $\e = 10^{-3}$ (See Theorem \ref{risky_thm}).}\label{T*D_figure}
\end{figure}

Next, we compute bounds on the expected cost of the online risky policy and compare with the previously proposed policies.

\begin{theorem}\label{risky_cost_thm}
For any $(D, \e) \in \mathcal R$, consider an online risky policy $\pi^{T^*}$ defined in Definition \ref{risky_def}, where the threshold $T^*$ is given by (\ref{threshold}) as function of $D$ and $\e$. If $D \geq \bar D$ then
\begin{equation}\label{risky_cost1}
  J^{\pi^{T^*}}(D,\e) \leq    \frac{\beta}{\a_1(R_1-1)}e^{-a_0(D- T^*)},
\end{equation}
and if $D \leq \bar D$
\begin{equation}\label{risky_cost2}
  J^{\pi^{T^*}}(D,\e) \leq \frac{1 - e^{-\a_1 D}}{(R_1-1)(1 - e^{-\a_1 T^*})}\Big(T^*+1+\frac{\beta}{\a_1}\Big) - \frac{D}{R_1-1},
\end{equation}
where $\beta = \frac{\a_1}{\a_0 (1-\frac{\a_0}{2})}$ and $\bar D = \frac{1}{\a_1}\log\big(\frac{\beta}{\e}\big)$.
\end{theorem}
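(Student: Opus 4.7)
The plan is to view the cost $J^{\pi^{T^*}}(D,\e) = \E\big[\int_0^{\tau_0} \mathbf{1}_{\{0 < Q_t < T^*\}}\,dt\big]$ as the expected total occupation time of the queue process in the region $B := (0,T^*)$, and to decompose this time according to the sample-path structure induced by alternating visits to region $A := [T^*,\infty)$ (free server, rate $R_0$) and region $B$ (costly server, rate $R_1$). Denote by $J_B(x)$ the expected cost starting from $Q_0 = x$ inside $B$; both bounds in the theorem will follow from a case split on whether $D$ lies in $A$ or in $B$.

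The main analytical tool is optional stopping applied to several martingales. Exactly as in the derivation of Lemma \ref{pd_exact_lemma}, $e^{-\alpha_0 Q_t}$ is a martingale while $Q_t$ evolves in $A$, and $e^{-\alpha_1 Q_t}$ is a martingale while $Q_t$ evolves in $B$. Applying optional stopping at the first exit time of each region gives (i) the first-passage probability $\pr[\tau_{T^*} < \infty \mid Q_0 = x] = e^{-\alpha_0(x - T^*)}$ for $x \geq T^*$ under the free server, and (ii) an upper bound on the conditional top-exit probability from $B$ of the form $(1 - e^{-\alpha_1 x})/(1 - e^{-\alpha_1 T^*})$, obtained by using the unit jump size to control the overshoot at the upward crossing of $T^*$ in $[T^*, T^*+1]$. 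In parallel, optional stopping applied to the drift martingale $Q_t - (R_1 - 1)t$ at the $B$-exit time $\tau_B$ yields the Wald-type identity $\E_x[\tau_B] = (\E_x[Q_{\tau_B}] - x)/(R_1 - 1)$, and combining with the overshoot bound $\E_x[Q_{\tau_B}] \leq (T^*+1)\,\pr_x[\text{top exit}]$ gives an upper bound on the expected length of a single $B$-excursion.

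To bound $J_B(T^*)$ I would invoke the regenerative structure of the sample path: once $Q_t$ exits $B$ upward at some level $L \in [T^*, T^*+1]$, it returns to $T^*$ from above in $A$ with probability $e^{-\alpha_0(L - T^*)}$, in which case a statistically identical $B$-excursion begins. Summing the per-excursion expected cost (from the Wald identity above) over a geometric number of cycles yields $J_B(T^*) \leq \beta/(\alpha_1(R_1-1))$, where the constant $\beta = \alpha_1/(\alpha_0(1-\alpha_0/2))$ arises from a Taylor expansion of $\E[e^{-\alpha_0(L-T^*)}]$ that retains the second-order term -- the same expansion that produces the factor $1-\alpha_0/2$ in the definition of $\beta$ in Theorem \ref{risky_thm}.

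Case 1 ($D \geq \bar D$) then follows by multiplying the first-passage probability in (i) above by $J_B(T^*)$: $J = e^{-\alpha_0(D - T^*)}\,J_B(T^*)$, giving (\ref{risky_cost1}). Case 2 ($D \leq \bar D$) requires one additional step, since $Q_0 = D$ already lies inside $B$: the expected duration of the first excursion starting from $D$ is bounded by the Wald identity combined with the overshoot-adjusted top-exit probability from $D$, while the expected subsequent cost after reaching $T^*$ on a top exit is bounded by $J_B(T^*)$; collecting the two contributions and simplifying produces (\ref{risky_cost2}). The hardest step throughout is the control of the overshoot at the upward crossing of $T^*$: a crude unit bound is too loose, and a tight analysis requires the second-order expansion above, which is what ultimately pins down the constant $\beta$ in both Theorems \ref{risky_thm} and \ref{risky_cost_thm}.
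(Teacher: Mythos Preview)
Your proposal is correct and follows essentially the same approach as the paper. The paper writes the renewal argument for $J^{\pi^{T}}(T,\e)$ as a single self-referential inequality
\[
J^{\pi^{T}}(T,\e) \leq \frac{\E[Q_{\tau_T}-T]}{R_1-1} + \E\big[e^{-\alpha_0(Q_{\tau_T}-T)}\big]\,J^{\pi^{T}}(T,\e),
\]
which it then solves and bounds via the second-order expansion $e^{-x}\le 1-x+x^2/2$ together with $0\le Q_{\tau_T}-T\le 1$; this is exactly your ``geometric number of cycles'' sum combined with the Taylor step that produces $\beta$. The two cases are handled just as you describe: for $D\ge \bar D$ one multiplies by the first-passage probability $e^{-\alpha_0(D-T^*)}$, and for $D\le \bar D$ one applies optional stopping to $Q_t-(R_1-1)t$ at $\min\{\tau_e,\tau_{T^*}\}$ to bound the first $B$-excursion, then adds $\pr(\tau_{T^*}<\tau_e)\cdot J^{\pi^{T^*}}(T^*,\e)$ and invokes Lemma~\ref{stop_lemma} for the exit probability.
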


In the following, we compare the expected cost of the presented policies using numerical methods, and illustrate that the bounds  derived in Theorems  \ref{offline_range_thm}, \ref{safe_thm} and \ref{risky_cost_thm} on the expected cost function are close to the exact value.

\subsection{Performance Comparison}

Figure \ref{cost_comparison_fig} compares the expected cost functions of the off-line, online safe and online risky policies as a function of the initial buffer size $D$, when the interruption probability is fixed to $\e = 10^{-3}$, the arrival rate from the free server is $R_0 = 1.05$, and the arrival rate from the costly server is $R_c = R_1-R_0 = 0.15$. We plot the bounds on the expected cost given by Theorems \ref{offline_range_thm}, \ref{safe_thm} and \ref{risky_cost_thm} as well as the expected cost function numerically computed by the Monte-Carlo method. Figure \ref{cost_comparison_fig} shows that the analytical bounds we computed for the expected cost of various control policies closely match the exact cost functions computed via simulations.

\begin{figure}[htbp]
\centering
  \includegraphics[width=4in]{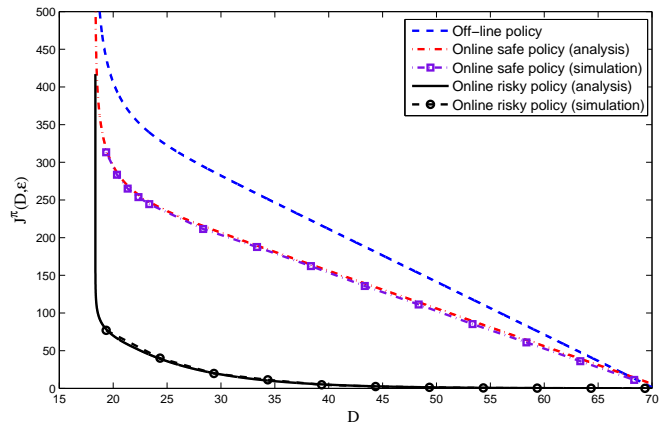}
  \caption{Expected cost (units of time) of the presented control policies as a function of the initial buffer size for interruption probability $\e = 10^{-3}$. The analytical bounds are given by Theorems \ref{offline_range_thm}, \ref{safe_thm} and \ref{risky_cost_thm}.}\label{cost_comparison_fig}
\end{figure}

Observe that the expected cost of the risky policy is significantly smaller that both online safe and off-line policies. For example, the risky policy allows us to decrease the initial buffer size from 70 to 20 with an average of $70 \times 0.15 \approx 10$ extra packets from the costly server.  The expected cost in terms of the number packets received from the costly server is 43 and 61 for the online safe and off-line policy, respectively.

Moreover, note that it is merely the existence of the costly server as a backup that allows us to improve the user's quality of experience without actually using too many packets from the costly server. For example, observe that the risky policy satisfies QoE metrics of $D = 35$ and $\e = 10^{-3}$, by only using on average about one extra packet from the costly server. However, without the costly server, in order to decrease the initial buffer size from 70 to 35, the interruption probability has to increase from $10^{-3}$ to about $0.03$ (see Lemma \ref{pd_exact_lemma}).

\section{Dynamic Programming Approach}\label{dp_sec}
In this section, we present a characterization of the optimal association policy in terms of the Hamilton-Jacobi-Bellman (HJB) equation. Note that because of the probabilistic constraint over the space of sample paths of the buffer size, the optimal policy is not necessarily Markov with respect to the buffer size as the state of the system. We take a similar approach as in \cite{Chen04} where by expanding the state space, a Bellman equation is provided as the optimality condition of an MDP with probabilistic constraint. In particular, consider the pair $(Q_t,p_t)$ as the state variable, where $Q_t$ denotes the buffer size and $p_t$ represents the desired level of interruption probability given the information at time $t$. Note that $p_t$ is a Martingale by definition \cite{SP_book}.
The evolution of $Q_t$ is governed by the following stochastic differential equation
\begin{equation}\label{x_SDE}
    dQ_t = -dt + dN^u,\quad Q_0 = D,
\end{equation}
where $N^u$ is a Poisson counter with rate $R_{u_t} = R_0 + u_t\cdot R_c$. For any $(D,\e) \in \mathcal R$ and any optimal policy $\pi$, the constraint $p^{\pi}(D) \leq \e$ is active. Hence, we consider the sample paths of $p_t$ such that $p_0 = \e$. Moreover, we have $\E[p_t] = \e$ for all $t$, where the expectation is with respect to the Poisson jumps. Let $dp_t = \p_t - p_t$ be the change in state $p$, if a Poisson jump occurs in an infinitesimal interval of length $dt$. Also, let $dp_t = dp_0$ be the change in state $p$ if no jump occurs. Therefore,
$$0 = \E[dp_t] = R_{u_t} dt (\p_t - p_t) + (1-R_{u_t} dt)dp_0.$$
By solving the above equation for $dp_0$, we obtain the evolution of $p_t$ as a function of the control process $\p_t$ and $u_t$:
\begin{equation}\label{p_SDE}
    dp_t = (p_t - \p_t) (R_{u_t} dt - dN^u), \quad p_0 = \e.
\end{equation}

Similarly to the arguments of Theorem 2 of \cite{Chen04}, by principle of optimality we can write the following dynamic programming equation
\begin{equation}\label{HJB_eq1}
    V(Q, p) = \min_{u \in\{0,1\}, \p \in[0,1]} \big\{ u dt +\E[V(Q+dQ, p+dp)] \big\}.
\end{equation}

If $V$ is continuously differentiable, by It\={o}'s Lemma for jump processes, we have
\begin{eqnarray*}
V(Q+dQ, p+dp) - V(Q,p) &=& \frac{\partial V}{\partial Q} (-dt) + \frac{\partial V}{\partial p} \cdot (p - \p)R_u dt  + \big(V(Q+1,\p) - V(Q,p)\big)dN^u,
\end{eqnarray*}
which implies the following HJB equation after dividing (\ref{HJB_eq1}) by $dt$ and taking the limit as $t$ goes to zero:
\begin{eqnarray}\label{HJB}
   \frac{\partial V (Q,p)}{\partial Q} &=&  \min_{u \in\{0,1\}, \p \in[0,1]} \big\{ u+ \frac{\partial V}{\partial p} \cdot (p - \p)R_u \nonumber  + R_u \big(V(Q+1,\p) - V(Q,p)\big)   \big\}
\end{eqnarray}

The optimal policy $\pi$ is obtained by characterizing the optimal solution of the partial differential equation in (\ref{HJB}) together with the boundary condition $V(Q,1) = 0$. Since such equations are in general difficult to solve analytically, we use the \emph{guess and check} approach, where we propose a candidate for the value function and verify that it nearly satisfies the HJB equation almost everywhere.
For any $(Q,p) \in \mathcal R$, define
\begin{equation}\label{T_xp}
    T(Q,p) = \left\{
                   \begin{array}{ll}
                     \frac{1}{\a _1-\a _0}\big[\log\big(\frac{\theta}{p}\big) - \a_0 Q \big], & \hbox{if $Q \geq  \frac{1}{a_1}\log\big(\frac{\theta}{p}\big)$,} \\
                     \frac{1}{\a_1}\log\Big(\frac{p + \theta(1-e^{-\a_1 Q}) - 1}{p - e^{-\a_1 Q}}\Big), & \hbox{otherwise,}
                   \end{array}
                 \right.
\end{equation}
where $\theta = \frac{\a_1}{\a_0}$. The candidate solution for HJB equation (\ref{HJB}) is given by
\begin{equation}\label{candidate1}
    \bar V(Q,p) = \frac{1}{\a_0(1-\frac{\a_0}{2})(R_1-1)}e^{-a_0(Q- T(Q,p))},
\end{equation}
when $Q \geq  \frac{1}{a_1}\log\big(\frac{\theta}{p}\big)$, and
\begin{equation}\label{candidate2}
  \bar V(Q,p) = \frac{p + \theta(1-e^{-\a_1 Q}) - 1}{(R_1-1)(\theta -1)}\big(T(Q,p)+\frac{\beta}{\a_1}\big) - \frac{Q}{R_1-1},
\end{equation}
when $Q < \frac{1}{a_1}\log\big(\frac{\theta}{p}\big)$. Note that the candidate solution is derived  from the structure of the expected cost of the risky policy (cf. Theorem \ref{risky_cost_thm}). We may verify that $\bar V$ satisfies the HJB equation (\ref{HJB}) for all $(Q,p)$ such that $Q \geq  \frac{1}{a_1}\log\big(\frac{\theta}{p}\big)$ or $Q \leq  \frac{1}{a_1}\log\big(\frac{\theta}{p}\big)-1$, but for other $(Q,p)$ the HJB equation is only approximately satisfied. This is due to the discontinuity of the queue-length process which does not allow us to exactly match the expected cost starting from below the threshold with the one starting from above the threshold.
Therefore, owing to approximate characterization of the cost of the risky policy, we may not prove or disprove optimality of this policy.

In the following, we use a fluid model to provide an \emph{exact} characterization of the optimal control policy using appropriate HJB equation. We show that the optimal policy takes a threshold structure similarly to the online risky policy.

\section{Optimal Association Policy for a Fluid Model}\label{fluid_sec}
Thus far, we concentrated on design and analysis of various network association policies in an uncertain environment, where network uncertainties are modeled using a Poisson arrival process. We provided closed-form approximations of the cost of different policies. However, an exact analytical solution is required to prove optimality of the risky policy.  This is particularly challenging, since an exact distribution of threshold over-shoots is desired, owing to the discontinuous nature of the Poisson process. In this part, we exploit a  second-order approximation of the Poisson process \cite{Kurtz78} and model the receiver's buffer size using a controlled Brownian motion with drift.

Consider the system model as in Figure \ref{MultiServer_model_fig}, with following queue-length dynamics at the receiver:
\begin{equation}\label{brownian_model}
    dQ_t = (R_{u_t} - 1) dt + dW_t, \quad Q_0 = D,
\end{equation}
where $W_t$ is the Wiener process, $u_t \in \{0, 1\}$ is the receiver's decision at time $t$ on using the free or costly server. As in the preceding part, we assume that the media file size $F$ is infinite and $R_1 > R_0 > 1$.

Define the control policy (network association policy) as in Definition \ref{policy_def}. The goal is to find a feasible policy that minimizes the usage cost defined in (\ref{cost_def}) such that the interruption probability $p^{\pi}(D)$ defined in (\ref{p_int_pi}) is at most $\e$. As in the previous part, the set of feasible policies and the value function is given by Definition \ref{feas_policy_def} and (\ref{value_def}), respectively. The following lemma is the counterpart of Lemma \ref{pd_exact_lemma} for the fluid model.

\begin{lemma}\label{lemma:degenerate}
Let $\pi_i \equiv i$ denote a degenerate policy, for $i \in \{0, 1\}$. The interruption probability for such policy is given by
\begin{equation}\label{pint_degenerate}
    p^{\pi_i}(D) = e^{-\t_i D}, \quad \foral D \geq 0,
\end{equation}
where $\t_i = 2(R_i - 1)$, for $i \in \{0,1\}$.
\end{lemma}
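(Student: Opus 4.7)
Under the degenerate policy $\pi_i \equiv i$, equation (\ref{brownian_model}) becomes a Brownian motion with constant positive drift $\mu_i = R_i - 1 > 0$ and unit volatility, started at $Q_0 = D$. The statement is the classical ruin-probability formula for such a process, so my plan is to carry out the standard exponential-martingale / optional-stopping argument, making sure the integrability conditions are cleanly verified.

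First I would look for a constant $\theta > 0$ that makes $M_t := e^{-\theta Q_t}$ a martingale. Applying It\^o's formula to (\ref{brownian_model}) with $u_t \equiv i$ gives
\begin{equation*}
dM_t = e^{-\theta Q_t}\Bigl(-\theta \mu_i + \tfrac{\theta^2}{2}\Bigr)\,dt \; - \; \theta\, e^{-\theta Q_t}\,dW_t,
\end{equation*}
so the drift vanishes iff $\theta = 2\mu_i = 2(R_i-1) = \theta_i$. For this choice, the stochastic integral part is a local martingale, and because $Q_t \ge 0$ on $[0,\tau_0)$ we have $0 \le M_{t \wedge \tau_0} \le 1$, which promotes the stopped process $\{M_{t\wedge\tau_0}\}$ to a genuine bounded martingale.

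Next I would apply optional stopping at $\tau_0 \wedge n$ for finite $n$ to obtain $\E[M_{\tau_0 \wedge n}] = M_0 = e^{-\theta_i D}$, and then pass to the limit $n \to \infty$. On the event $\{\tau_0 < \infty\}$, continuity of paths gives $Q_{\tau_0} = 0$, hence $M_{\tau_0 \wedge n} \to 1$. On the complementary event $\{\tau_0 = \infty\}$, the strong law for Brownian motion with positive drift yields $Q_t/t \to \mu_i > 0$ almost surely, so $Q_t \to \infty$ and $M_{\tau_0 \wedge n} = e^{-\theta_i Q_n} \to 0$. Since all these quantities are bounded by $1$, dominated convergence gives
\begin{equation*}
e^{-\theta_i D} \;=\; \lim_{n\to\infty}\E[M_{\tau_0 \wedge n}] \;=\; \Pr\{\tau_0 < \infty\}\cdot 1 + \Pr\{\tau_0 = \infty\}\cdot 0 \;=\; p^{\pi_i}(D),
\end{equation*}
which is exactly (\ref{pint_degenerate}).

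The only subtle point is making sure we can legitimately take the limit, since a priori $M_t$ itself is only a local martingale; the saving grace is that stopping at $\tau_0$ keeps $Q_t \ge 0$ and hence $M_{t\wedge\tau_0} \in [0,1]$, and the almost-sure drift-to-infinity on $\{\tau_0 = \infty\}$ is needed to identify the limit of $M_n$ there. Everything else is a routine application of It\^o's formula and dominated convergence; no nontrivial computation is expected beyond identifying the correct exponent $\theta_i$.
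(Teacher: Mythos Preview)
Your argument is correct and uses essentially the same exponential-martingale / optional-stopping idea as the paper: both pick $\theta_i = 2(R_i-1)$ so that $e^{-\theta_i Q_t}$ is a martingale, apply optional stopping at a truncated time, and pass to the limit via dominated convergence.

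The only noteworthy difference is that the paper deliberately proves a bit more than needed for the lemma: instead of stopping at $\tau_0$ alone, it introduces an upper boundary $b>D$, derives the two-sided crossing probabilities
\[
\Pr(Q_\tau = 0) = \frac{e^{-\theta_0 D}-e^{-\theta_0 b}}{1-e^{-\theta_0 b}},\qquad
\Pr(Q_\tau = b) = \frac{1-e^{-\theta_0 D}}{1-e^{-\theta_0 b}},
\]
and then lets $b\to\infty$. This two-boundary formula is reused later in the proofs of Theorems \ref{thm:threshold_policy} and \ref{thm:threshold_cost}. Your direct one-boundary argument is slightly cleaner for the lemma itself (and your use of the SLLN for drifted Brownian motion to identify the limit on $\{\tau_0=\infty\}$ is a clean way to avoid the second boundary), but it does not yield that auxiliary result.
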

\begin{proof}
See Appendix \ref{app:fluid_policy_analysis}.
\end{proof}

First, we provide a characterization of the optimal policy via a Hamilton-Jacobi-Bellman equation. As in Section \ref{dp_sec}, we expand the state variables to  $(Q,p)$, where $Q$ is the queue-length with dynamics given by \ref{brownian_model}, and $p$ is the desired interruption probability. Using Martingale representation theorem \cite{SP_book}, we may write the dynamics of $p$ as follows:
\begin{equation}\label{p_dynamics}
    dp_t = \ph_t \ dW_t, \quad p_0 = \e,
\end{equation}
where $\ph_t$ is a predictable process which is adapted with respect to natural filtration of the history process. In this work, we focus on the control processes that are Markovian with respect to the state process $(Q_t, p_t)$. Therefore, using the principal of optimality, we may write the following dynamic programming equation:
\begin{equation}\label{HJB_fluid1}
    V(Q, p) = \min_{(u, \ph) \in\{0,1\}\times \R} \big\{ u dt +\E[V(Q+dQ, p+dp)] \big\},
\end{equation}
where $(u, \ph)$ are the control actions. For a twice differentiable function $V$, we may exploit It\={o}'s Lemma to get
\begin{eqnarray*}
dV(Q,p) &=&  \frac{\partial V}{\partial Q} dQ + \frac{\partial V}{\partial p} dp  +  \frac12 \frac{\partial^2 V}{\partial Q^2} (dQ)^2 + \frac12 \frac{\partial^2 V}{\partial p^2} (dp)^2 + \frac{\partial^2 V}{\partial Q \partial p} dQ dp \\
&\stackrel{(a)}{=}&  \frac{\partial V}{\partial Q} ((R_{u} - 1) dt + dW) +\frac{\partial V}{\partial p} (\ph \ dW) \\
&& +  \frac12 \frac{\partial^2 V}{\partial Q^2} dt + \frac12 \frac{\partial^2 V}{\partial p^2} (\ph)^2 dt  + \frac{\partial^2 V}{\partial Q \partial p} \ph dt,
\end{eqnarray*}
where $(a)$ follows from state dynamics in (\ref{brownian_model}) and (\ref{p_dynamics}). Replacing the above equation back in (\ref{HJB_fluid1}), and taking the expectation with respect to $dW$, and limit as $dt$ tends to zero, we obtain the following HJB equation
\begin{equation}\label{HJB_fluid2}
    0 = \min_{(u, \ph) \in\{0,1\}\times \R} \bigg\{ u +  \frac{\partial V}{\partial Q} (R_{u} - 1) + \frac12 \frac{\partial^2 V}{\partial Q^2} + \frac12 \frac{\partial^2 V}{\partial p^2} (\ph)^2 + \frac{\partial^2 V}{\partial Q \partial p} \ph \bigg\}.
\end{equation}

Note that we require the following boundary conditions for the value function:
\begin{equation}\label{BC_valuefunction}
    V(Q, 1) = V(0, p) = 0, \quad \foral Q \geq 0, 0 \leq p \leq 1
\end{equation}

Providing an analytical solution for the partial differential equation in (\ref{HJB_fluid2}) is often challenging. However, we may take a guess and check approach and use a threshold policy as the basis of our guess. Note that we need to verify the HJB equation for the set of state variables that reachable by a feasible policy. In particular, in light of Lemma \ref{lemma:degenerate}, it is clear that for $p \leq e^{-\t_1 Q}$, there is no feasible policy and the value function $V(Q,p) = \infty$. Moreover, for all $p \geq e^{-\t_0 Q}$, observe that the degenerate policy $\pi_0 \equiv 0$ is optimal, $V(Q,p) = 0$ which also satisfies the HJB equation and the boundary conditions. Therefore, we focus on the non-degenerate region
\begin{equation}\label{Region_def}
\mathcal R = \Big\{(Q,p): Q\geq 0, e^{-\t_1 Q} < p < e^{-\t_0 Q}  \Big\}.
\end{equation}

Figure \ref{regions_fig} illustrates a conceptual example of this non-degenerate region. In the following, we first define a threshold policy similar to the risky policy of Definition \ref{risky_def}, and present a closed-form characterization of its cost function. Then, we show that, for a proper choice of the threshold the associated cost function satisfies the HJB equation in (\ref{HJB_fluid2}), and the optimal solution of the minimization problem in (\ref{HJB_fluid2}) coincides with the threshold policy. Hence, we establish the optimality of the proposed policy.

\begin{theorem}\label{thm:threshold_policy}
Let $\pi^T$ be the threshold policy as in Definition \ref{risky_def}, parameterized with threshold value $T$. Also, let the queue-length dynamics be governed by (\ref{brownian_model}). Then, the interruption probability for this policy is given by
\begin{equation}\label{pint_threshold}
    p^T(D) = \left\{
             \begin{array}{ll}
               e^{-\t_1 D} +  p(T) (1- \frac{\t_0}{\t_1})   \big(1- e^{-\t_1 D} \big), & \hbox{$0 \leq D \leq T$} \\
                p(T) e^{-\t_0(D-T)}, & \hbox{$D \geq T$,}
             \end{array}
           \right.
\end{equation}
where
$$p(T) = \frac{e^{-\t_1 T}}{\frac{\t_0}{\t_1} + (1- \frac{\t_0}{\t_1}) e^{-\t_1 T}},$$
and $\theta_i \triangleq 2(R_i - 1)$, for $i \in \{0,1\}$.
\end{theorem}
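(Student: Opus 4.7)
The plan is to use the strong Markov property at the threshold to reduce the computation of $g(D) := p^T(D)$ to a two-region matching problem, and then pin down the single remaining unknown $p(T) := g(T)$ via the smooth-fit condition at $D = T$.

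First, for $D \geq T$: above the threshold the controlled process is Brownian motion with positive drift $R_0 - 1$, so the standard one-sided hitting formula (the same one that underlies Lemma \ref{lemma:degenerate}) gives $\pr\{\tau_T < \infty \mid Q_0 = D\} = e^{-\theta_0 (D - T)}$. Applying the strong Markov property at $\tau_T$ yields $g(D) = p(T)\, e^{-\theta_0 (D - T)}$, which is the second case of the claim. Next, for $D \in [0, T]$: below the threshold the process is Brownian motion with drift $R_1 - 1 > 0$, and the scale-function computation for drifted Brownian motion on $(0, T)$ gives $\pr\{Q \text{ exits at } T\} = (1 - e^{-\theta_1 D})/(1 - e^{-\theta_1 T})$. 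Conditioning on whether the first exit of $(0,T)$ occurs at $0$ or at $T$, and again invoking the strong Markov property, produces
\begin{equation*}
g(D) \;=\; \frac{e^{-\theta_1 D} - e^{-\theta_1 T}}{1 - e^{-\theta_1 T}} \;+\; \frac{1 - e^{-\theta_1 D}}{1 - e^{-\theta_1 T}}\, p(T).
\end{equation*}

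To determine $p(T)$, I would invoke the smooth-fit principle: because the diffusion coefficient is constant across the threshold (only the drift jumps), $g$ must be continuously differentiable at $T$ -- this is the standard requirement that $g$ lie in the domain of the infinitesimal generator of the switched diffusion, with no singular local-time contribution at $T$. Differentiating the two expressions for $g$ and equating the one-sided derivatives at $T$ gives a single linear equation in $p(T)$, which I would solve to obtain $p(T) = e^{-\theta_1 T}/[\theta_0/\theta_1 + (1 - \theta_0/\theta_1) e^{-\theta_1 T}]$, matching the stated expression. Substituting this back into the formula for $g$ on $[0, T]$ and rearranging algebraically as $e^{-\theta_1 D} + p(T)(1 - \theta_0/\theta_1)(1 - e^{-\theta_1 D})$ yields the first case of the claim.

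The main obstacle is the rigorous justification of smooth pasting at $T$; once it is in hand, the rest is algebra. An equivalent framing, which makes the regularity issue transparent, is to set up the second-order ODE $\tfrac12 g'' + (R_i - 1) g' = 0$ in each region (via Dynkin's formula), impose the boundary data $g(0) = 1$ (immediate interruption) and $\lim_{D \to \infty} g(D) = 0$ (since $R_0 > 1$ implies the hitting probability of any finite level from far above tends to zero), and solve the four-unknown/four-condition system. In that framing the $C^1$ matching at $T$ appears as the necessary regularity for the candidate $g$ to be a genuine solution of the generator equation, and the resulting formulas coincide with those derived via the strong Markov route.
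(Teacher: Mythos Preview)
Your argument is correct, and for the two regions $D>T$ and $D<T$ it coincides with the paper's: both use the strong Markov property together with the one-sided hitting probability (Lemma~\ref{lemma:degenerate}) above the threshold and the two-sided exit probabilities (equations~(\ref{crossing0_prob})--(\ref{crossingb_prob})) below it, leaving $p(T)$ as the only unknown.

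Where you diverge is in pinning down $p(T)$. You invoke the smooth-fit principle, equating the one-sided derivatives of $g$ at $T$; carrying this out gives
\[
-\theta_0\,p(T)\;=\;\frac{\theta_1 e^{-\theta_1 T}\bigl(p(T)-1\bigr)}{1-e^{-\theta_1 T}},
\]
which rearranges to the stated formula. The paper instead runs a ``one-step deviation'' analysis: it perturbs from $Q_0=T$ by a small time $h$, conditions on the sign of $Z=Q_h-T$, expands $\E[e^{-\theta_i Z}\mid Z\gtrless 0]$ to order $\sqrt{h}$, and lets $h\to 0$. The resulting equation,
\[
p(T)\Big[\theta_0+\tfrac{e^{-\theta_1 T}}{1-e^{-\theta_1 T}}\theta_1\Big]=\tfrac{e^{-\theta_1 T}}{1-e^{-\theta_1 T}}\theta_1,
\]
is algebraically identical to yours. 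In effect the paper is deriving smooth pasting from first principles via a discretization limit, whereas you cite it as a standing regularity fact for diffusions whose volatility is continuous across the switching boundary. Your route is shorter and more in line with standard diffusion/optimal-stopping arguments; the paper's route is more self-contained and avoids appealing to an external smooth-fit theorem. Either way, the only point that needs care is exactly the one you flag: justifying that $g\in C^1$ at $T$ (equivalently, that no local-time term appears in the generator equation there).
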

\begin{proof}
See Appendix \ref{app:fluid_policy_analysis}.
\end{proof}

\begin{corollary}\label{cor:threshold_policy}
Let $\pi^T$ be the threshold policy as in Definition \ref{risky_def}. Then, the policy $\pi^T$ is $(D,\e)$-feasible (cf. Definition \ref{feas_policy_def}) for the following choices of the threshold $T$:
\begin{enumerate}
  \item For all $\e \geq e^{-\t_0 D}$, let $T = 0$.
  \item For all $e^{-\t_1 D} \leq \e \leq e^{-\t_0 D}$, let $T = T(D,\e)$ be the unique solution of $p^T(D) = \e,$ where $p^T(D)$ is given by (\ref{pint_threshold}).
  \item For all other $\e$, there exists no such $T$.
\end{enumerate}

\end{corollary}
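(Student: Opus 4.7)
The plan is to derive the corollary from Theorem~\ref{thm:threshold_policy}, which provides the closed-form expression (\ref{pint_threshold}) for $p^T(D)$. The argument proceeds by analyzing the map $T \mapsto p^T(D)$ on $[0,\infty)$, with $D$ fixed, and showing it is continuous and strictly decreasing from $e^{-\t_0 D}$ down to $e^{-\t_1 D}$; the three cases of the corollary then follow at once from the intermediate value theorem and Lemma~\ref{lemma:degenerate}.

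The boundary values are immediate. Setting $T = 0$ in (\ref{pint_threshold}) gives $p(0) = 1$ and hence $p^{T=0}(D) = e^{-\t_0 D}$, which is precisely the interruption probability of the degenerate policy $\pi_0$; this directly supplies case~(1). Taking $T \to \infty$ in either branch of the formula yields $p(T) \to 0$ and $p^T(D) \to e^{-\t_1 D}$, matching the interruption probability of $\pi_1$ in Lemma~\ref{lemma:degenerate}. Continuity at the seam $T = D$ is a routine substitution: both branches collapse to $e^{-\t_1 D}/(a + (1-a)e^{-\t_1 D})$, where $a \triangleq \t_0/\t_1 \in (0,1)$.

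For the monotonicity, set $x(T) = e^{-\t_1 T}$, so that $p(T) = x/(a + (1-a)x)$. Since $dp/dx = a/(a+(1-a)x)^2 > 0$ and $x$ is strictly decreasing in $T$, the function $p(T)$ is strictly decreasing in $T$. On the regime $T \geq D$, the identity $p^T(D) = e^{-\t_1 D} + p(T)(1-a)(1-e^{-\t_1 D})$ is then strictly decreasing. On the regime $0 \leq T \leq D$, I would perform a short log-derivative computation:
\begin{equation*}
\frac{d}{dT}\log p(T) \;=\; -\t_1 + \frac{(1-a)\t_1 x}{a + (1-a)x} \;=\; -\frac{\t_0}{a + (1-a)x} \;\leq\; -\t_0,
\end{equation*}
with strict inequality for $T > 0$; hence $\frac{d}{dT}\bigl[p(T)e^{-\t_0(D-T)}\bigr] = e^{-\t_0(D-T)}\bigl[p'(T) + \t_0\, p(T)\bigr] < 0$ for every $T > 0$.

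Combining these, $T \mapsto p^T(D)$ is continuous and strictly decreasing from $e^{-\t_0 D}$ at $T=0$ to $e^{-\t_1 D}$ as $T \to \infty$, so the intermediate value theorem supplies the unique $T(D,\e) \in [0,\infty)$ of case~(2) for every $\e \in [e^{-\t_1 D}, e^{-\t_0 D}]$. Case~(3) follows because $p^T(D) > e^{-\t_1 D}$ for every finite $T$, so no threshold policy can achieve $p^T(D) \leq \e$ when $\e < e^{-\t_1 D}$. The main obstacle is the monotonicity on the branch $T \leq D$: since $p(T)$ is decreasing while the factor $e^{\t_0 T}$ is increasing, the sign of the derivative of their product is not visible by inspection, and the log-derivative identity above is the cleanest way to settle it.
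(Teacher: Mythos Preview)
Your proposal is correct and follows the same approach as the paper: the paper's proof is a one-liner invoking Theorem~\ref{thm:threshold_policy} together with the fact that $p^T(D)\in[e^{-\t_1 D},e^{-\t_0 D}]$ is monotonically decreasing in $T$, and you have simply supplied the details (boundary values, continuity at $T=D$, and the log-derivative computation for the branch $T\le D$) that the paper leaves implicit.
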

\begin{proof}
The proof directly follows from the characterization of the interruption probability in Theorem \ref{thm:threshold_policy}, noting the fact that $p^T(D) \in [e^{-\t_1 D}, e^{-\t_0 D}]$ is monotonically decreasing in $T$.
\end{proof}

Next, we provide a exact characterization of the expected cost of the threshold policy $\pi^T$ for a given threshold $T$. This allows us to obtain a  proper candidate solution for the HJB equation.

\begin{theorem}\label{thm:threshold_cost}
Let $\pi^T$ be the threshold policy as in Definition \ref{risky_def}. Define $J^T(D)$ as the expected cost associated with policy $\pi^T$ given the initial condition $D$ for queue-length dynamics (\ref{brownian_model}) and threshold $T$. The \emph{cost-to-go} function $J^T(D)$ is given by
\begin{equation}\label{J_D}
    J^T(D) = \left\{
               \begin{array}{ll}
                 e^{-\t_0(D - T)} J(T), & \hbox{$D \geq T$} \\
                 (J(T) + \frac{2}{\t_1}T)\frac{1-e^{-\t_1 D}}{1-e^{-\t_1 T}} - \frac{2}{\t_1}D, & \hbox{$D \leq T$,}
               \end{array}
             \right.
\end{equation}
where
\begin{equation}\label{J_T}
    J(T) = \frac{\frac{2}{\t_1^2}\big[1 - (1+\t_1 T)e^{-\t_1 T}\big]}{\frac{\t_0}{\t_1} + (1- \frac{\t_0}{\t_1}) e^{-\t_1 T}}.
\end{equation}
\end{theorem}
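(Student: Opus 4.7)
The plan is to treat $J(T) := J^T(T)$ as a single unknown, reduce the computation of $J^T(D)$ on each side of the threshold to an application of the strong Markov property combined with classical hitting-time formulas for Brownian motion with drift, and then close the system by imposing a smooth-fit condition at $D = T$.

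For the upper branch $D \ge T$, the policy $\pi^T$ prescribes $u_t = 0$ while $Q_t > T$, so no cost accrues until the first hitting time $\tau_T := \inf\{t : Q_t = T\}$; by the strong Markov property, the remaining expected cost from $\tau_T$ (on the event $\tau_T < \infty$) equals $J(T)$. Since $Q$ evolves as Brownian motion with positive drift $R_0 - 1 = \theta_0/2$ on $(T,\infty)$, the standard downward-hitting formula gives $\pr_D(\tau_T < \infty) = e^{-\theta_0(D-T)}$, hence $J^T(D) = e^{-\theta_0(D-T)} J(T)$. For the lower branch $D \in (0,T)$, the policy prescribes $u_t = 1$, so unit cost accumulates until the exit time $\tau := \tau_0 \wedge \tau_T$, and the strong Markov property yields $J^T(D) = \E_D[\tau] + \pr_D(\tau_T < \tau_0)\, J(T)$. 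Solving the ODE $\tfrac{1}{2} g'' + (\theta_1/2) g' = 0$ with $g(0) = 0$, $g(T) = 1$ gives the scale function $\pr_D(\tau_T < \tau_0) = (1 - e^{-\theta_1 D})/(1 - e^{-\theta_1 T})$, and solving $\tfrac{1}{2} h'' + (\theta_1/2) h' = -1$ with $h(0) = h(T) = 0$ gives $\E_D[\tau] = -\tfrac{2D}{\theta_1} + \tfrac{2T}{\theta_1}\cdot (1 - e^{-\theta_1 D})/(1 - e^{-\theta_1 T})$. Substituting these into the Markov decomposition reproduces the lower branch of (\ref{J_D}).

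To pin down $J(T)$ I would invoke continuity of $(J^T)'$ at $D = T$. Matching the one-sided derivatives yields the single linear equation $-\theta_0 J(T) = -\tfrac{2}{\theta_1} + (J(T) + \tfrac{2T}{\theta_1})\cdot \tfrac{\theta_1 e^{-\theta_1 T}}{1 - e^{-\theta_1 T}}$; solving for $J(T)$ and simplifying via the identities $1 - \tfrac{\theta_1 T e^{-\theta_1 T}}{1 - e^{-\theta_1 T}} = \tfrac{1 - (1+\theta_1 T) e^{-\theta_1 T}}{1 - e^{-\theta_1 T}}$ and $\theta_0 + \tfrac{\theta_1 e^{-\theta_1 T}}{1 - e^{-\theta_1 T}} = \theta_1\cdot\tfrac{\theta_0/\theta_1 + (1 - \theta_0/\theta_1) e^{-\theta_1 T}}{1 - e^{-\theta_1 T}}$ recovers the closed form (\ref{J_T}). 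The main obstacle I anticipate is the rigorous justification of the $C^1$ matching at $T$; the cleanest bypass is to treat (\ref{J_D})--(\ref{J_T}) as an ansatz, verify by direct differentiation that it is $C^1$ at $T$ and satisfies the Feynman--Kac ODE $\tfrac{1}{2}(J^T)'' + (R_{\pi^T(D)} - 1)(J^T)' + \pi^T(D) = 0$ for $D \ne T$, together with $J^T(0) = 0$ and $J^T(D) \to 0$ as $D \to \infty$, and then apply It\={o}'s formula together with optional stopping to $M_t = J^T(Q_{t \wedge \tau_0}) + \int_0^{t \wedge \tau_0} \pi^T(Q_s)\,ds$ to identify the ansatz with the cost-to-go $\E_D\bigl[\int_0^{\tau_0} \pi^T(Q_s)\,ds\bigr]$.
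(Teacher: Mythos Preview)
Your derivation of the two branches of $J^T(D)$ is essentially the paper's argument: strong Markov property plus the hitting probability $e^{-\t_0(D-T)}$ for $D\ge T$, and for $D<T$ the paper derives the same ODE $J''+\t_1 J'+2=0$ from a small-$h$ expansion (equivalent to your Feynman--Kac ODE), solves it with $J(0)=0$, and reads off the exit-time formula as a by-product.

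Where you diverge is in pinning down $J(T)$. The paper does not use smooth fit; instead it runs a small-step excursion argument at the threshold, conditioning on $\{Q_h\ge T\}$ versus $\{Q_h<T\}$, expanding $\E[e^{-\t_i Z}\mid Z\gtrless 0]$ to order $\sqrt{h}$, dividing through by $\beta\sqrt{h}$, and taking $h\to 0$ to obtain the linear equation
\[
J(T)\Big[\t_0+\frac{\t_1 e^{-\t_1 T}}{1-e^{-\t_1 T}}\Big]=\frac{2}{\t_1}\Big[1-\frac{\t_1 T e^{-\t_1 T}}{1-e^{-\t_1 T}}\Big],
\]
which is exactly your $C^1$-matching equation. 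Your route is cleaner and self-justifying: the verification plan (ansatz $+$ It\={o} $+$ optional stopping on $M_t=J^T(Q_{t\wedge\tau_0})+\int_0^{t\wedge\tau_0}\pi^T(Q_s)\,ds$) makes the smooth-fit condition a consequence rather than an assumption, whereas the paper's $\sqrt{h}$-limit, with its ad hoc drift $\a\in[R_0-1,R_1-1]$ at the threshold and the $\delta=\Theta(h)$ bookkeeping, is more heuristic. The excursion viewpoint does have the merit of being the same machinery used for $p(T)$ in the companion Theorem~\ref{thm:threshold_policy}, so the paper gains uniformity; your approach gains rigor and brevity.
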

\begin{proof}
See Appendix \ref{app:fluid_policy_analysis}.
\end{proof}

The following theorem provide a candidate for value function and verify the optimality condition given by HJB equation in (\ref{HJB_fluid2}).

\begin{theorem}\label{thm:value_function_HJB}
For all $(Q,p) \in \mathcal R$, define
\begin{equation}\label{value_function_candidate}
    V(Q,p) = J^{T(Q,p)}(Q),
\end{equation}
where $J^T(\cdot)$ is defined in (\ref{J_D}), $\mathcal R$ is defined in (\ref{Region_def}), and $T(Q,p)$ is the unique solution of
$$p^T(Q) = p.$$ Then, the HJB equation (\ref{HJB_fluid2}) and boundary condition (\ref{BC_valuefunction}) hold for all $(Q,p) \in \mathcal R$.
\end{theorem}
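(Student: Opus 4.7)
The plan is to verify the boundary conditions and then the HJB PDE by reducing the verification to the ODE structure that $J^T$ and $p^T$ satisfy under the threshold policy. The boundary conditions are immediate: at $p = 1$ the probabilistic constraint imposes nothing and the degenerate policy $\pi_0$ gives zero cost, while at $Q = 0$ the interruption event $\tau_0 = 0$ has already occurred so no further cost is incurred. In both cases the formulas in Theorems \ref{thm:threshold_policy} and \ref{thm:threshold_cost} give $V = 0$, in agreement with (\ref{BC_valuefunction}).

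The core of the argument is a consistency identity obtained by applying It\={o}'s lemma in two different ways along trajectories of $\pi^T$. For any $(Q_0, p_0) \in \mathcal R$, set $T_0 = T(Q_0, p_0)$; by Theorem \ref{thm:threshold_policy} and the Markov property, under $\pi^{T_0}$ we have $p_t = p^{T_0}(Q_t)$ for all $t$, so the implicit threshold $T(Q_t, p_t) = T_0$ is conserved in time. Viewing $V(Q_t, p_t) = J^{T_0}(Q_t)$ as a one-dimensional function of $Q_t$ and applying It\={o} yields
\begin{equation*}
dV = \Big\{(R_u - 1)(J^{T_0})'(Q_t) + \tfrac12 (J^{T_0})''(Q_t)\Big\}\,dt + (J^{T_0})'(Q_t)\,dW_t,
\end{equation*}
with $u = \pi^{T_0}(Q_t)$ constant on each sub-region. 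Since $J^{T_0}$ is the cost-to-go of $\pi^{T_0}$, the Feynman--Kac identity $(R_u - 1)(J^{T_0})' + \tfrac12 (J^{T_0})'' + u = 0$ holds in each sub-region, so the drift of $V$ equals $-u$. Computing the same $dV$ directly from the $(Q,p)$-expression of $V$ produces a drift of $V_Q(R_u - 1) + \tfrac12 V_{QQ} + \tfrac12 V_{pp}\,\ph^2 + V_{Qp}\,\ph$, and the diffusion coefficient of $p_t$ equals $(p^{T_0})'(Q_t)$. Matching the two drifts yields the pointwise identity
\begin{equation*}
u + V_Q(R_u - 1) + \tfrac12 V_{QQ} + \tfrac12 V_{pp}(\ph^*)^2 + V_{Qp}\,\ph^* \;=\; 0
\end{equation*}
at $u = \pi^{T_0}(Q)$ and $\ph^* = (p^{T_0})'(Q)$. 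Since every $(Q,p) \in \mathcal R$ lies on such a trajectory, this identity holds pointwise on $\mathcal R$.

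It then remains to show that the pair $(u, \ph^*)$ attains the minimum in (\ref{HJB_fluid2}). The minimization over $\ph \in \R$ is a scalar convex quadratic (using $V_{pp} > 0$ on $\mathcal R$, a positivity check that follows from (\ref{J_D})--(\ref{J_T})), with unique minimizer $-V_{Qp}/V_{pp}$; implicit differentiation of $p^{T(Q,p)}(Q) = p$ combined with the chain rule on $V(Q,p) = J^{T(Q,p)}(Q)$ simplifies this expression to $(p^{T(Q,p)})'(Q)$, which matches $\ph^*$. Minimization over $u \in \{0,1\}$ reduces to the sign of $1 + R_c V_Q$, so one must show $V_Q \leq -1/R_c$ on $\{Q < T(Q,p)\}$ and $V_Q \geq -1/R_c$ on $\{Q > T(Q,p)\}$, with equality on the switching boundary.

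The main obstacle is this last \emph{smooth-pasting} step. It amounts to an algebraic computation: using the explicit formulas in Theorems \ref{thm:threshold_policy} and \ref{thm:threshold_cost} together with implicit differentiation of $p^T(Q) = p$, one evaluates $V_Q$ in each sub-region, verifies $V_Q = -1/R_c$ exactly on the switching curve, and then verifies the correct inequality on either side by a monotonicity argument in $Q$. The positivity $V_{pp} > 0$ used in the $\ph$-minimization reduces to a parallel explicit check. Once smooth pasting is established, the HJB equation is verified on all of $\mathcal R$ with the threshold policy attaining the minimum, which completes the proof.
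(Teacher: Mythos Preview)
Your argument is correct and takes a genuinely different route from the paper's. The paper proceeds by brute force: it writes $V$ piecewise as $V_0,V_1$ on the sub-regions $\mathcal R_0,\mathcal R_1$, computes the minimizers $(u^*,\ph^*)$ from (\ref{u_opt_problem})--(\ref{p_opt}), substitutes them back, and then verifies the resulting PDEs (\ref{HJB_partition1})--(\ref{HJB_partition2}) by direct symbolic calculation (the authors explicitly defer to Mathematica). Your approach replaces that last symbolic verification by a structural one: the conservation of $T(Q_t,p_t)=T_0$ along $\pi^{T_0}$-trajectories (which follows from the Markov property and Theorem~\ref{thm:threshold_policy}, independently of any HJB machinery) lets you read the drift of $V$ in two ways and invoke the Feynman--Kac ODE for $J^{T_0}$, so the HJB expression vanishes at the threshold control without ever expanding the PDE. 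This is cleaner and more illuminating, and it effectively anticipates the invariant-manifold content of Theorem~\ref{thm:threshold_optimalit}, which the paper only establishes afterward. Where the two approaches reconverge is the minimizer step: your $\ph$-claim reduces to the identity $F_T/G_T$ being independent of $Q$ (with $F=J^T(Q)$, $G=p^T(Q)$), which is immediate from the explicit formulas in both sub-regions, and your smooth-pasting check on $V_Q$ is exactly the computation the paper performs (and likewise leaves unexpanded) when it asserts that (\ref{u_opt_cond}) holds iff $(Q,p)\in\mathcal R_0$. One caveat worth flagging: $J^{T}(\cdot)$ is $C^1$ but not $C^2$ across $D=T$, so the two-dimensional It\={o} step strictly needs an It\={o}--Tanaka/local-time justification at the switching curve; the paper sidesteps this by verifying (\ref{HJB_partition1})--(\ref{HJB_partition2}) only on the open sub-regions, and you should note the same restriction.
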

\begin{proof}
See Appendix \ref{app:fluid_policy_analysis}.
\end{proof}

Theorem \ref{thm:value_function_HJB} verifies that the value function $V(Q,p)$ given by (\ref{value_function_candidate}) is indeed the optimal cost function defined in (\ref{value_def}). Furthermore, we can conclude that the policy $\pi^*(Q,p)$ achieving the minimum in the HJB equation (\ref{HJB_fluid2}) is optimal. In general, the optimal policy depends on both state variables $(Q,p)$ and is not Markov with respect to $Q$. In the following, we show that the state trajectory steered by the optimal policy is limited to  a one-dimensional manifold and the threshold policy $\pi^T$ is optimal for all $(Q,p) \in \mathcal R$. Recall the policy $\pi^T$ boils down to the optimal policy $\pi_0 \equiv 0$ for all other admissible states, by using threshold value $T=0$.

\begin{theorem}\label{thm:threshold_optimalit}
Let $\pi^*(Q,p)$ attain the minimum in the HJB equation (\ref{HJB_fluid2}) for any $(Q,p) \in \mathcal R$. Let $(Q^*_t, p^*_t)$ denote the state trajectory given the initial condition $(D,\e)$, under the control trajectory $(u^*_t, \ph^*_t) = \pi^*(Q^*_t, p^*_t)$. Then, the state trajectory is limited to a one-dimensional invariant manifold $\mathcal M(D,\e)$, where
\begin{eqnarray}\label{manifold_fluid}
&& \mathcal M{(D,\e)}= \big\{(Q,p): p = p^{T(D,\e)}(Q) \big\},
\end{eqnarray}
where $T(D, \e)$ is the solution of $p^T(D) = \e$, and $p^T(\cdot)$ is defined in (\ref{pint_threshold}). Moreover, the optimal policy $\pi^*(Q,p)$ coincides with the threshold policy $\pi^{T(D,\e)}(Q)$.
\end{theorem}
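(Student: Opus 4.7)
The plan is to compute the HJB minimizer $\pi^*(Q,p)$ explicitly, show it agrees with the threshold policy $\pi^{T(D,\e)}$ on the manifold $\mathcal M(D,\e)$, and then use the invariance of $\mathcal M(D,\e)$ under the threshold policy to conclude that the trajectory under $\pi^*$ stays on $\mathcal M(D,\e)$.

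First, I would characterize the HJB minimizer. Differentiating the Hamiltonian in (\ref{HJB_fluid2}) with respect to $\ph$ (noting $V_{pp}>0$ from the explicit form in Theorem \ref{thm:value_function_HJB}) gives the unconstrained minimizer $\ph^*(Q,p) = -V_{Qp}(Q,p)/V_{pp}(Q,p)$. Substituting back, the minimization over $u \in \{0,1\}$ reduces to comparing $V_Q(R_0-1)$ with $1 + V_Q(R_1-1)$, i.e., the sign of $1 + R_c V_Q$, where $R_c = R_1-R_0$. Thus $u^*(Q,p) = 1$ iff $V_Q(Q,p) < -1/R_c$, and $u^*(Q,p) = 0$ otherwise.

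Next, I would show that on the manifold these HJB-optimal actions coincide with those of $\pi^{T(D,\e)}$. Since $T(Q,p) = T(D,\e) \equiv T_0$ is constant on $\mathcal M(D,\e)$, the value function reduces to $V(Q, p^{T_0}(Q)) = J^{T_0}(Q)$, the explicit one-variable function of Theorem \ref{thm:threshold_cost}. A direct calculation, carefully separating the partial $V_Q$ from the derivative along the manifold by invoking the implicit relation $p^T(Q)=p$, shows that the sign of $V_Q + 1/R_c$ flips exactly at $Q = T_0$, matching the threshold rule of Definition \ref{risky_def}. Similarly, $\ph^*$ on the manifold reduces to $(p^{T_0})'(Q)$, which is precisely the diffusion coefficient required to keep $p_t = p^{T_0}(Q_t)$ along a Brownian trajectory via It\={o}'s lemma, the zero-drift part recovering the second-order ODE that $p^{T_0}$ satisfies in each branch $Q \lessgtr T_0$.

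Finally, I would establish manifold invariance. Under the stationary Markov-in-$Q$ policy $\pi^{T(D,\e)}$, the martingale $p_t$ equals $p^{T(D,\e)}(Q_t)$ by the Markov property and Theorem \ref{thm:threshold_policy}. The initial condition $p_0 = \e = p^{T(D,\e)}(D)$ is satisfied by Corollary \ref{cor:threshold_policy}, so the trajectory under $\pi^{T(D,\e)}$ remains on $\mathcal M(D,\e)$. Combined with the identification $\pi^*(Q,p) = \pi^{T(D,\e)}(Q)$ on the manifold, the trajectory $(Q^*_t, p^*_t)$ under $\pi^*$ likewise remains on $\mathcal M(D,\e)$, completing both claims of the theorem.

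The main obstacle will be correctly manipulating the partial derivatives of $V(Q,p) = J^{T(Q,p)}(Q)$: the raw partial $V_Q$ carries a chain-rule contribution through $T_Q$, implicit from $p^T(Q) = p$, yet the identification must collapse to the explicit one-dimensional derivative $(J^{T_0})'(Q)$ on the manifold. Reconciling these two perspectives, and verifying the sign change of $V_Q + 1/R_c$ across $Q = T_0$ in both branches of the piecewise $J^{T_0}(Q)$ from Theorem \ref{thm:threshold_cost}, is the technical heart of the proof; smoothness of $V$ at $Q = T_0$ needs to be checked with care since the threshold policy is discontinuous in $u$ there.
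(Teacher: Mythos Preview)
Your proposal is correct and follows essentially the same route as the paper: compute the HJB minimizer $(u^*,\ph^*)$ via $\ph^*=-V_{Qp}/V_{pp}$ and the sign of $V_Q+1/R_c$, observe that $T(Q,p)$ is constant along $\mathcal M(D,\e)$, and then verify manifold invariance through It\={o}'s lemma applied to $p^{T_0}(Q_t)$. The only cosmetic difference is that the paper imports the identification of $\{u^*=0\}$ with $\{Q\geq T(Q,p)\}$ directly from the region decomposition $\mathcal R_0,\mathcal R_1$ in the proof of Theorem~\ref{thm:value_function_HJB}, rather than rechecking the sign of $V_Q+1/R_c$ on the manifold, so the partial-derivative bookkeeping you flag as the technical heart is in fact already absorbed there.
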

\begin{proof}
See Appendix \ref{app:fluid_policy_analysis}.
\end{proof}
%

Figure \ref{manifold_fig} illustrates a conceptual figure describing the intuition behind Theorem \ref{thm:threshold_optimalit}. Observe that the optimal policy satisfying the HJB equation, divides the feasible state space into two sub-regions corresponding to $u^* =0$ and $u^* = 1$, i.e., the policy switches costly server on/off when the state of the system crosses the boundary between these sub-regions. Theorem \ref{thm:threshold_optimalit} states that for any initial condition $(Q_0, p_0) = (D,\e)$ the state trajectory lies on a one-dimensional manifold $\mathcal M(D,\e)$. Figure \ref{manifold_fig} illustrates these manifolds for different initial conditions. Since the state trajectory is limited to  a one-dimensional space, the decision of switching to the costly server merely depends on the queue-length process. The proof of Theorem \ref{thm:threshold_optimalit} in Appendix  \ref{app:fluid_policy_analysis}, shows that the queue-length at the switch point for each manifold $\mathcal M(D,\e)$ coincides with the threshold $T(D,\e)$ specified in Corollary \ref{cor:threshold_policy}.

\begin{figure}[htbp]
\centering
  \includegraphics[width=4in]{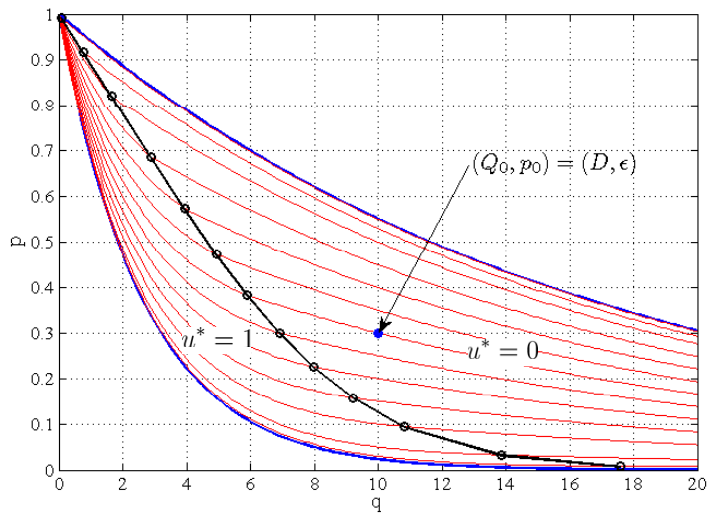}
  \caption{Trajectory of the optimal policy lies on a one-dimensional manifold.}\label{manifold_fig}
\end{figure}

\section{Conclusions and Future Work}\label{sec:conclusion}

We presented a new framework for studying media streaming systems in volatile environments, with focus on quality of user experience. We proposed two intuitive metrics that essentially capture the notion of \emph{delay} from the end-user's point of view. The proposed metrics in the context of media streaming, are initial buffering delay, and probability of interruption in media playback.   These metrics are tractable enough to be used as a benchmark for system design.

We first addressed the problem of streaming in a  technology-heterogeneous multi-server system. The main challenge in multi-server systems is inefficiencies in multi-path streaming due to duplicate packet reception. This issue can also significantly complicates the analysis. We proposed random linear network coding as the solution to this challenge. By sending random linear combination of packets, we remove the notion of identity from packets and hence, guarantee that no packet is redundant. Using this approach allows us to significantly simplify the flow control of multi-path streaming scenarios, and model heterogeneous multi-server systems as a single-server system.

Equipped with tools provided by network coding, we added another level of complexity to the multi-server system. We used our framework to study multi-server systems when the access cost varies across different servers. Our objective was to investigate the trade-offs between the network usage cost and the user's QoE requirements parameterized by initial waiting time and allowable probability of interruption in media playback.  For a Poisson arrival model, we analytically characterized and compared the expected cost of both off-line and online policies, finally showing that a threshold-based online risky policy achieves the lowest cost. The threshold policy uses the costly server if and only if the receiver's buffer is below a certain threshold. Moreover, we observed that even rare but properly timed usage of alternative access technologies significantly improves user experience without any bandwidth over-provisioning.

 We formulated the access cost minimization problem as a Markov decision problem with probabilistic constraints, and characterized the optimal policy by the HJB equation. For a fluid approximation model, we established the optimality of a threshold-based online policy in the class of deterministic Markov policies using the HJB equation as a verification method.

The framework that we have developed in this work can also be used to design adaptive resolution streaming systems that not only depend on the channel conditions but also on the delay requirement of the application, which is captured by the queue-length at the receiver. As for other extensions of this work, we would like to study more accurate models of channel variations such as the two-state Markov model due to Gillbert and Elliot. In this work we focused on deterministic network association policies. Another extension of this work would consist of studying randomized control policies.

\appendices
\section{Analysis of the Control Policies for the Poisson Arrival Model}\label{app:policy_analysis}

\begin{bproof} \textbf{of Theorem \ref{offline_range_thm}. }
By Definition \ref{feas_policy_def}, we need to show that $p^\pi(D) \leq \e$. By a union bound on the interruption probability, it is sufficient to verify
\begin{equation}\label{union_bd}
\pr\Big(\min_{0\leq t\leq t_s} Q_t \leq 0 \big| Q_0 = D\Big) + \pr\Big(\min_{t > t_s} Q_t \leq 0 \big| Q_0 = D\Big) \leq \e.
\end{equation}

In the interval $[0,t_s]$,  $Q_t$ behaves as in a single-server system with rate $R_1$. Hence, by Lemma \ref{pd_exact_lemma} we get
\begin{equation}\label{pint_bd1}
\pr\Big(\min_{0\leq t\leq t_s} Q_t \leq 0 \big| Q_0 = D\Big) \leq e^{-\alpha_1 D}.
\end{equation}

For the second term in (\ref{union_bd}), we have

\begin{eqnarray*}
  \pr\Big(\min_{t > t_s} Q_t \leq 0 \big| Q_0 = D\Big)   &=& \sum_{q = D-t_s}^\infty \pr\Big(\min_{t > t_s} Q_t \leq 0 \big| Q_{t_s} = q\Big) \pr(Q_{t_s} = q) \\
&  \stackrel{(a)}{\leq}& \sum_{q = D-t_s}^\infty e^{-\a_0 q} \pr(Q_{t_s} = q)   \\
&=& \sum_{k = 0}^\infty e^{-\a_0 (D+k-{t_s})} \pr(N_{t_s} + N^c_{t_s} = k) \\
&  \stackrel{(b)}{=}& \sum_{k = 0}^\infty e^{-\a_0 (D+k-{t_s})} \frac{e^{-R_1 {t_s}}(R_1 {t_s})^k}{k!}\\
 &=& e^{-\a_0 (D-{t_s})+R_1 {t_s} (e^{-\a_0}-1)} \sum_{k = 0}^\infty \frac{e^{-R_1 {t_s} e^{-\a_0}}(R_1 {t_s} e^{-\a_0})^k}{k!}\\
&=& \exp\Big(-\a_0 (D-{t_s})+R_1 {t_s} (e^{-\a_0}-1)\Big)\cdot 1 \\
 &\stackrel{(c)}{=}& \exp\Big(-\a_0 (D-{t_s})+R_1 {t_s} (-\frac{\a_0}{R_0})\Big) \\
 &\stackrel{(d)}{\leq}& \e - e^{-\alpha_1 D},
\end{eqnarray*}
where (a) follows from Lemma \ref{pd_exact_lemma} and the fact that $u_t = 0$, for $t \geq t_s$. (b) is true because $N_{t_s} + N^c_{t_s}$ is a Poisson random variable with mean $R_1 t_s$. (c) holds since $\a_0 = I(R_0)$ is the root of $\gamma(r) =  r + R_0(e^{-r} -1 )$. Finally, (d) follows from the hypothesis of the theorem.

By combining the above bounds, we may verify (\ref{union_bd}) which in turns proves feasibility of the proposed control policy.
\end{bproof}

\vspace{0.2in}
\begin{bproof} \textbf{of Theorem \ref{safe_thm}. }
Similarly to the proof of Theorem \ref{offline_range_thm}, we need to show that the total probability of interruption before and after crossing the threshold $S$ is bounded from above by $\e$. Observe that for any realization of $\tau_S$ the bound in (\ref{pint_bd1}) still holds. Further, since the costly server is not used after crossing the threshold and $Q_{\tau_S} \geq S$, Lemma \ref{pd_exact_lemma} implies
\begin{eqnarray}\label{pint_bd2}
\pr\Big(\min_{t > \tau_S} Q_t \leq 0 \big| Q_0 = D\Big)   \leq  e^{-\alpha_0 S} \leq \e -  e^{-\alpha_1 D},&&
\end{eqnarray}
where the second inequality follows from (\ref{S_range}). Finally, combining (\ref{pint_bd1}) and (\ref{pint_bd2}) gives $p^{\pi^S}(D) \leq \e$, which is the desired feasibility result.

For the second part, first observe that $J^{\pi^{S}}(D,\e) = \E[\tau_S]$. In order to cross a threshold $S \geq S^*$, the threshold $S^*$ must be crossed earlier, because $Q_0 = D \leq S^*$. Hence, $\tau_S$ stochastically dominates $\tau_S^*$, implying
$$J^{\pi^{S}}(D,\e) = \E[\tau_S] \geq \E[\tau_{S^*}] =J^{\pi^{S^*}}(D,\e), \quad \foral S\geq S^*.$$

It only remains to compute $\E[\tau_{S^*}]$. It follows from Wald's identity or Doob's optional stopping theorem \cite{SP_book} that
\begin{equation}\label{Walds_stopping}
D + (R_1 - 1)\E[\tau_{S^*}] = \E[Q_{\tau_{S^*}}] = S^* + \xi,
\end{equation}

where $\xi \in [0,1)$ because the jumps of a Poisson process are of units size, and hence the overshoot size when crossing a threshold is bounded by one, i.e.,  $S^* \leq Q_{\tau_{S^*}} < S^*+1$. Rearranging the terms in (\ref{Walds_stopping}) and plugging the value of $S^*$ from (\ref{S_range}) immediately gives the result.
\end{bproof}


\begin{lemma}\label{stop_lemma}
Let $Q_t$ be the buffer size of a single-server system with arrival rate $R > 1$. Let the initial buffer size be $D$ and for any $T\geq D > 0$ define the following stopping times
\begin{equation}\label{stop_def}
\tau_T = \inf \{t > 0: Q_t \geq T\},\quad  \tau_e = \inf \{t \geq 0: Q_t \leq 0\}.
\end{equation}
Then
\begin{equation}\label{stop_prob}
\pr(\tau_e > \tau_T) = \frac{1-e^{-I(R) D}}{1-\E[e^{-I(R) Q_{\tau_T}}|\tau_e > \tau_T]},
\end{equation}
where $I(R)$ is defined in Lemma \ref{pd_exact_lemma}.
\end{lemma}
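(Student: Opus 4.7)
The plan is to use a standard optional stopping argument with the exponential martingale $M_t = e^{-I(R) Q_t}$, exploiting the fact that $I(R)$ is defined precisely as the nontrivial root of the cumulant-generating equation $\gamma(r) = r + R(e^{-r} - 1) = 0$.

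First I would verify that $M_t$ is a martingale. Since $Q_t = D + N_t - t$ with $N_t$ a Poisson process of rate $R$, a direct computation gives
\begin{equation*}
\E[e^{-I(R)(N_t - t)}] = \exp\bigl(t \cdot [I(R) + R(e^{-I(R)} - 1)]\bigr) = e^{t\gamma(I(R))} = 1,
\end{equation*}
so $\E[M_t] = M_0 = e^{-I(R) D}$; the martingale property over any interval follows by the same argument together with stationary independent increments of $N_t$.

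Next I would apply optional stopping to the bounded stopping time $\tau = \min(\tau_e, \tau_T)$. On the event $\{t < \tau\}$, we have $0 < Q_t < T$, so $M_t \in (e^{-I(R) T}, 1)$ is bounded. Since $R > 1$ gives positive drift, $Q_t \to \infty$ a.s. and in particular $\tau_T < \infty$, so $\tau < \infty$ a.s. Optional stopping then yields $\E[M_\tau] = e^{-I(R) D}$.

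Now I would split this expectation on the two disjoint events. On $\{\tau_e \leq \tau_T\}$ the queue hits zero exactly (the queue only decreases continuously at unit rate between unit upward jumps, so it cannot overshoot zero), hence $Q_{\tau_e} = 0$ and $M_{\tau_e} = 1$. Writing $p = \pr(\tau_e > \tau_T)$, we obtain
\begin{equation*}
e^{-I(R) D} = (1-p) \cdot 1 + p \cdot \E[e^{-I(R) Q_{\tau_T}} \mid \tau_e > \tau_T],
\end{equation*}
from which the claimed identity follows by elementary algebra.

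The main obstacle is really just the justification of optional stopping, and that is benign here because $M_t$ is bounded between $e^{-I(R) T}$ and $1$ on $[0, \tau]$ and $\tau < \infty$ a.s. Everything else is a direct algebraic manipulation, the one subtlety being the observation that $Q_t$ cannot overshoot zero downward, so conditioning on $\{\tau_e \leq \tau_T\}$ contributes exactly $1$ to $\E[M_\tau]$ rather than an unknown conditional expectation.
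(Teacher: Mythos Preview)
Your proposal is correct and follows essentially the same route as the paper: apply Doob's optional stopping theorem to the exponential martingale $e^{-I(R)Q_t}$ at the stopping time $\tau=\min\{\tau_e,\tau_T\}$, then split the expectation on the two events and use $Q_{\tau_e}=0$. If anything, you supply more detail than the paper does, giving the explicit cumulant computation for the martingale property and the boundedness argument for optional stopping, whereas the paper simply asserts these facts.
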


\begin{proof}
Let $Y(t) = e^{-I(R) Q_t}$. We may verify that $Y(t)$ is a Martingale and uniformly integrable. Also, define the stopping time $\tau = \min\{\tau_T, \tau_e\}$. Since $R>1$, we have
$\pr(\tau \geq t) \leq \pr(0 < Q_t < T) \rightarrow 0$, as $t\rightarrow \infty$. Hence, $\tau < \infty$ almost surely. Therefore, we can employ Doob's optional stopping theorem \cite{SP_book} to write
\begin{eqnarray*}
e^{-I(R) D} &=& \E[Y(0)] = \E[Y(\tau)] \\
&=& \pr(\tau_e \leq \tau_T) \cdot 1\\
&& + \pr(\tau_e > \tau_T) \E[e^{-I(R) Q_{\tau_T}} |\tau_e > \tau_T].
\end{eqnarray*}
The claim immediately follows from the above relation after rearranging the terms.

\end{proof}


\begin{bproof} \textbf{of Theorem \ref{risky_thm}. }
Let us first characterize the interruption probability of the policy $\pi^T$ when the initial buffer size is $D = T$. In this case,  by definition of $\pi^T$ the behavior of $Q_t$ is initially the same as a single-server system with rate $R_1$ until the threshold $T$ is crossed. Hence,
\begin{eqnarray}
 p^{\pi^{T}}(T) &=& \pr\Big(\min_{t \geq 0} Q_t \leq 0 \big| Q_0 = T\Big)  \nonumber \\
&=& \pr(\tau_e < \tau_{T})\cdot 1 \nonumber \\
&& + \pr(\tau_{T} < \tau_e)\pr\Big(\min_{t \geq \tau_{T}} Q_t \leq 0 \big| \tau_{T} < \tau_e, Q_0 = T\Big) \nonumber\\
&=& \frac{e^{-\a_1 T} - \E[e^{-\a_1 Q_{\tau_{T}}}|\tau_e > \tau_{T}]}{1-\E[e^{-\a_1 Q_{\tau_{T}}}|\tau_e > \tau_{T}]} \nonumber \\
&&+ \frac{\big(1-e^{-\a_1 T}\big)\pr\Big(\min_{t \geq \tau_{T}} Q_t \leq 0 \big| \tau_{T} < \tau_e, Q_0 = T\Big) }{1-\E[e^{-\a_1 Q_{\tau_{T}}}|\tau_e > \tau_{T}]},
\label{eqn:step1}
\end{eqnarray}
where the last equality follows directly from Lemma \ref{stop_lemma}. Further, we have
\begin{eqnarray}\label{pint_cond}
 && \!\!\!\!\!\! \pr\Big(\min_{t \geq \tau_{T}} Q_t \leq 0 \big| \tau_{T} < \tau_e, Q_0 = T\Big) \nonumber = \int_{T}^{T + 1} \pr\Big(\min_{t \geq \tau_{T}} Q_t \leq 0 \big| Q_{\tau_{T}}\Big) d\mu (Q_{\tau_{T}}) \nonumber \\
&& \stackrel{(a)}{=} \int_{T}^{T + 1} \pr\Big(\min_{t \geq 0} Q_t \leq 0 \big| Q_0\Big) d\mu (Q_0) \nonumber \\
&& \stackrel{(b)}{=} \int_{T}^{T + 1} \pr\Big(\min_{t \geq 0} Q_t \leq 0 \big| \min_{t \geq 0} Q_t \leq T, Q_0\Big)\nonumber \pr\big(\min_{t \geq 0} Q_t \leq T| Q_0\big) d\mu (Q_0) \nonumber \\
&& \stackrel{(c)}{=} \int_{T}^{T + 1} p^{\pi^{T}}(T) e^{-\a_0(Q_0 - T)}  d\mu (Q_0) \nonumber \\
 && = \E[e^{-\a_0(Q_{\tau_{T}} - T)}| \tau_{T} < \tau_e] p^{\pi^{T}}(T),
\end{eqnarray}
where $\mu$ denotes the conditional distribution of $Q_{\tau_{T}}$ given $\tau_{T} < \tau_e$. Note that $Q_{\tau_{T}} \in [T, T+1]$ because the size of the overshoot is bounded by one. Further, (a) follows from stationarity of the arrival processes and the control policy, (b) holds because a necessary condition for the interruption event is to cross the threshold $T$ when starting from a point $Q_0 \geq T$. Finally (c) follows from Lemma \ref{pd_exact_lemma} and the definition of the risky policy. The relations (\ref{eqn:step1}) and (\ref{pint_cond}) together result in
\begin{equation}\label{pint_risky}
    p^{\pi^{T}}(T) = \frac{   e^{-\a_1 T} \big(1-\E_\mu[e^{-\a_1 (Q_{\tau_{T}} - T)}]\big) }{ 1- \E_\mu[e^{-\a_0(Q_{\tau_{T}} - T)}] + \kappa },
\end{equation}
where $\kappa = \E_\mu[e^{-\a_0 Q_{\tau_{T}} -(\a_1 - \a_0)T}]  - \E_\mu[e^{-\a_1 Q_{\tau_{T}} }] \geq 0$. Therefore, using the fact that
\begin{equation}\label{fact}
 1- x \leq e^{-x} \leq 1 - x +\frac{x^2}{2}, \quad \foral x \geq 0,
\end{equation}
we can provide the following bound
\begin{eqnarray} \label{pint_riksy_bd}
   p^{\pi^{T}}(T) &\leq & \frac{  e^{-\a_1 T} \big(\a_1 \E_\mu[Q_{\tau_{T}} - T]\big)}{\a_0 \E_\mu[Q_{\tau_{T}} - T]\Big(1 - \frac{\a_0}{2} \cdot \frac{\E_\mu[(Q_{\tau_{T}} - T)^2]}{\E_\mu[Q_{\tau_{T}} - T]} \Big)} \nonumber \\
&\leq& \frac{\a_1}{\a_0(1-\frac{\a_0}{2})} e^{-\a_1 T} \nonumber \\
&=& \beta e^{-\a_1 T},
\end{eqnarray}
where the last inequality holds, since $0 \leq Q_{\tau_{\bar D}} - \bar D \leq 1$.

Now we prove feasibility of the risky policy $\pi^{T^*}$ when $D > \bar D$. Observe that by (\ref{threshold}), $D > T^*$, hence the behavior of the buffer size $Q_t$ is the same as the one in a single-server system with rate $R_0$ until the threshold $T^*$ is crossed. Thus
\begin{eqnarray*}
    p^{\pi^{T^*}}(D)  &=&   \pr\Big(\min_{t \geq 0} Q_t \leq 0 \big| Q_0 = D\Big) \\
&=& \pr\Big(\min_{t \geq 0} Q_t \leq 0 \big| \min_{t \geq 0} Q_t \leq T^*, Q_0=D\Big)  \pr\big(\min_{t \geq 0} Q_t \leq T^*| Q_0=D\big) \\
&=&  p^{\pi^{T^*}}(T^*)  e^{-\a_0(D-T^*)} \\
&\leq&  \beta e^{-(\a_1-\a_0) T^* - \a_0 D} = \e,
\end{eqnarray*}
where the inequality follows from (\ref{pint_riksy_bd}), and the last equality holds by (\ref{threshold}).

Next we verify the feasibility of the policy  $\pi^{T^*}$ for  $D \leq \bar D$. In this case, $D \leq T^*$ and by definition of the risky policy the system behaves as a single-server system with arrival rate $R_1$ until the threshold $T^*$ is crossed or the buffer size hits zero (interruption). Hence, we can bound the interruption probability as follows
\begin{eqnarray*}
  p^{\pi^{T^*}}(D)  &=& \pr(\tau_e < \tau_{T^*}) \cdot 1
 + \pr(\tau_{T^*} < \tau_e) \pr\Big(\min_{t \geq \tau_{T^*}} Q_t \leq 0 \big| \tau_{T^*} < \tau_e, Q_0 = D\Big)\\
& \stackrel{(a)}{=}& 1 - \pr(\tau_{T^*} < \tau_e)\Big(1 - \E_\mu[e^{-\a_0(Q_{\tau_{T^*}} - T^*)}] p^{\pi^{T^*}}(T^*) \Big)\\
& \stackrel{(b)}{=}& 1 - \frac{1-e^{-\a_1 D}}{1-\E_\mu[e^{-\a_1 Q_{\tau_{T^*}} }]} \Big(1 - \E_\mu[e^{-\a_0(Q_{\tau_{T^*}} - T^*)}] p^{\pi^{T^*}}(T^*) \Big)\\
& \stackrel{(c)}{\leq}& 1 - \frac{1-e^{-\a_1 D}}{1-\E_\mu[e^{-\a_1 Q_{\tau_{T^*}} }]} \Big(1 - \E_\mu[e^{-\a_0(Q_{\tau_{T^*}} - T^*)}] \beta e^{-\a_1 T^*} \Big)\\
& \stackrel{(d)}{\leq}& \frac{(\beta-1)(1 - e^{-\a_1 D})}{1-\E_\mu[e^{-\a_1 Q_{\tau_{T^*}} }]} + 1 - \beta(1 - e^{-\a_1 D})\\
& \stackrel{(e)}{\leq}& \frac{(\beta-1)(1 - e^{-\a_1 D})}{1-e^{-\a_1 T^*} } + 1 - \beta(1 - e^{-\a_1 D}) \stackrel{(f)}{=} \e,
\end{eqnarray*}
where (a) follows from (\ref{pint_cond}), (b) is a direct consequence of Lemma \ref{stop_lemma}, (c) is a result of (\ref{pint_riksy_bd}), (d) may be verified by noting that $\a_0 = I(R_0)$, $\a_1 = I(R_1)$ and $R_1 \geq R_0$, (e) holds since $\beta \geq 1$ and $Q_{\tau_{T^*}} \geq T^*$. Finally, (f) immediately follows from plugging in the definition of $T^*$ from (\ref{threshold}).

Therefore, the risky policy $\pi^{T^*}$ is feasible by Definition \ref{feas_policy_def}. Observe that the buffer size under any policy $\pi^T$ of the form (\ref{risky_policy}) with $T\geq T^*$ stochastically dominates that of policy $\pi^{T^*}$, because $\pi^T$ switches to the costly server earlier, and stays in that state longer. Hence, $\pi^T$ is feasible for all $T\geq T^*$.
\end{bproof}

\vspace{0.2in}
\begin{bproof} \textbf{of Theorem \ref{risky_cost_thm}. }
Similarly to the proof of Theorem \ref{risky_thm}, we first consider the risky policy $\pi^T$ with the initial buffer size $T$. By definition of $\pi^{T}$, the costly server is used until the threshold $T$ is crossed. Thus the expected cost of this policy is bounded by the expected time until crossing the threshold plus the expected cost given that the threshold is crossed, i.e.,
\begin{equation*}
   J^{\pi^{T}}(T,\e) \leq \frac{\E[Q_{\tau_{T}}] - T}{R_1-1} + E[e^{-\a_0(Q_{\tau_{T}} - T)}]    J^{\pi^{T}}(T,\e),
\end{equation*}
where $\tau_{T}$ is defined in (\ref{stop_def}). The above relation implies
\begin{eqnarray}\label{risky_cost_bd}
     J^{\pi^{T}}(T,\e) &\leq&  \frac{1}{R_1-1} \cdot \frac{\E[Q_{\tau_{T}} - T] }{1- \E[e^{-\a_0(Q_{\tau_{T}} - T)}]} \nonumber \\
&\leq& \frac{1}{R_1-1} \cdot \frac{\E[Q_{\tau_{T}} - T] }{1 - \E\big[1 - \a_0(Q_{\tau_{T}} - T) + \frac{\a_0^2}{2} (Q_{\tau_{T}} - T)^2 \big] } \nonumber \\
&=& \frac{1}{R_1-1} \cdot \frac{1}{\a_0\Big(1 - \frac{\a_0}{2} \cdot \frac{\E_[(Q_{\tau_{T}} - T)^2]}{\E_[Q_{\tau_{T}} - T]}\Big)} \nonumber \\
&\leq& \!\! \frac{1}{\a_0(R_1-1)(1-\frac{\a_0}{2})} = \frac{\beta}{\a_1(R_1-1)},
\end{eqnarray}
where the second inequality follows from the fact in (\ref{fact}), and the last equality holds by definition of $\beta$. Now for any $D\geq \bar D$ we can write
\begin{eqnarray*}
     J^{\pi^{T^*}}(D,\e)   &=&  \pr\Big(\min_{t \geq 0} Q_t \leq T^* \big | Q_0 = D \Big)      J^{\pi^{T^*}}(T^*,\e) \\
&=& e^{-a_0(D- T^*)} J^{\pi^{T^*}}(T^*,\e)
\end{eqnarray*}
where the inequality holds by Lemma \ref{pd_exact_lemma}. Combining this with (\ref{risky_cost_bd}) gives the result in (\ref{risky_cost1}).

If $D\leq \bar D$, the risky policy uses the costly server until the threshold $T^*$ is crossed at $\tau_{T^*}$ or the interruption event ($\tau_e$), whichever happens first. Afterwards, no extra cost is incurred if an interruption has occurred. Otherwise, by (\ref{risky_cost_bd}) an extra cost of at most $\frac{\beta}{\a_1(R_1-1)} $ is incurred, i.e.,
\begin{equation*}
     J^{\pi^{T^*}}(D,\e) \leq \E\big[\min\{\tau_e, \tau_{T^*}\}\big] + \pr(\tau_{T^*} < \tau_e) \frac{\beta}{\a_1(R_1-1)}.
\end{equation*}

By Doob's optional stopping theorem applied to the Martingale $Z_t = Q_t - (R_1-1)t$, we obtain

$$D = \pr(\tau_{T^*} < \tau_e) \E[Q_{\tau_{T^*}}| \tau_{T^*} < \tau_e] -(R_1-1)\E\big[\min\{\tau_e, \tau_{T^*}\}\big], $$
which implies
$$\E\big[\min\{\tau_e, \tau_{T^*}\}\big] \leq \frac{\pr(\tau_{T^*} < \tau_e) (T^*+1) - D}{R_1-1}.  $$

By combining the preceding relations we conclude that
$$  J^{\pi^{T^*}}(D,\e) \leq \frac{\pr(\tau_{T^*} < \tau_e)}{R_1-1}\Big(T^*+1+\frac{\beta}{\a_1}\Big) - \frac{D}{R_1-1},$$
which immediately implies (\ref{risky_cost2}) by employing Lemma \ref{stop_lemma}.

\end{bproof}

\section{Analysis of the Threshold Policy for the Fluid Approximation Model}\label{app:fluid_policy_analysis}

\begin{bproof} \textbf{of Lemma \ref{lemma:degenerate}. } There are multiple approaches to prove the claim. We prove a more general case using Doob's optional stopping theorem that will be useful in the later arguments. We only consider $i = 0$; the other case is the same.

Let $Y_t = e^{-\t_0 Q_t}$. It is straightforward to show that $Y_t$ is a Martingale with respect to $W_t$. Now consider the boundary crossing problem, where we are interested in the probability of hitting zero before a boundary $b > D$. Let $\tau$ denote the hitting time of either boundaries. For any $n > 0$, we may apply Doob's optional stopping theorem \cite{SP_book} to the stopped Martingale $Y_{\tau \wedge n}$ to write
$$\E[Y_{\tau \wedge b}] = \E[e^{-\t_0 Q_{\tau \wedge n}}] = e^{-\t_0 D}, \quad \foral n.$$

Now, we take the limit as  $n \rightarrow \infty$ and exploit the dominant convergence theorem to establish:
\begin{equation}\label{doob_result}
\E[Y_\tau] = \E[e^{-\t_0 Q_\tau}] = \lim_{n \rightarrow \infty}\E[e^{-\t_0 Q_{\tau \wedge n}}] =  e^{-\t_0 D}.
\end{equation}

Finally, using the Borel-Cantelli Lemma, we can show $\tau$ is finite with probability one, which allows us to decompose (\ref{doob_result}) and characterize the boundary crossing probabilities as
\begin{eqnarray*}
\pr(Q_\tau = 0)\cdot 1 + \pr(Q_\tau = b) \cdot e^{-\t_0 b} &=& e^{-\t_0 D}, \\
\pr(Q_\tau = 0) + \pr(Q_\tau = b) &=& 1.
\end{eqnarray*}

Solving the above equations gives
\begin{eqnarray}\label{crossing0_prob}
\pr(Q_\tau = 0) &=& \frac{e^{-\t_0 D} - e^{-\t_0 b}}{1 - e^{-\t_0 b}}, \\
\pr(Q_\tau = b) &=& \frac{1 -e^{-\t_0 D} }{1 - e^{-\t_0 b}}.\label{crossingb_prob}
\end{eqnarray}

Taking the limit as $b \rightarrow \infty$ proves the claim.
\end{bproof}
\vspace{.2in}

\begin{bproof} \textbf{of Theorem \ref{thm:threshold_policy}. }
We first characterize the interruption probability for the cases $D > T$ and $D < T$ given $p(T)$, which is the interruption probability starting from $D=T$.

For any $x \geq 0$, define $\tau_x$ as the first hitting time of boundary $x$, i.e.,
\begin{equation}\label{hitting_time_def}
\tau_x = \inf\{t\geq 0: Q_t = x\}.
\end{equation}

For the case $D > T$, using path-continuity of $Q_t$, strong Markov property and Lemma \ref{lemma:degenerate}, we have
\begin{eqnarray*}
p(D) &=& \pr(\tau_0 < \infty| Q_0 = D) \\
&=&  \pr(\tau_0 < \infty | Q_0 = T) \cdot \pr(\tau_T < \infty | Q_0 = D)\\
&=& e^{-\t_0(D-T)}p(T).
\end{eqnarray*}

For the case $D < T$, we use the boundary crossing probabilities (\ref{crossing0_prob}) and (\ref{crossingb_prob}) that we derived in the proof of Lemma \ref{lemma:degenerate}. Note that for the threshold policy when $D < T$, the drift is set to $R_1 -1 = 2\t_1$. Hence, by total probability theorem and strong Markov property, we obtain
\begin{eqnarray*}
p(D) &=& \pr(\tau_0 < \infty| Q_0 = D) \\
&=& 1\cdot\pr(\tau_0 < \tau_T | Q_0 = D) + \pr(\tau_0 < \infty | Q_0 = T)\cdot \pr(\tau_T < \tau_0 | Q_0 = D) \\
&=& \frac{e^{-\t_1 D} - e^{-\t_1 b}}{1 - e^{-\t_1 b}} + p(T)\frac{1 -e^{-\t_1 D} }{1 - e^{-\t_1 b}}.
\end{eqnarray*}
We may obtain the desired result after simple manipulations of the above relation, once we compute $p(T)$.

In order to characterize $p(T)$, we use an analogue of one-step deviation analysis for Markov chains. Let $Q_0 = T$, and consider a small deviation $Q_h$, where $h$ is a small time-step. Since $Q_t$ is a Brownian motion with drift, $Q_h$ has a normal distribution with variance $h$, and mean of $T+\a h$, where $\a \in [R_0 -1, R_1 -1]$. Therefore, the probability of $Q_h \geq T$ is $(\frac12 + \delta) + o(h)$, where $\delta$ is a small constant of the same order of $h$, and $\frac{o(h)}{h} \rightarrow 0$ as $h\rightarrow 0$. By strong Markov property of the Brownian motion, (\ref{crossing0_prob}) and (\ref{crossingb_prob}), we have
\begin{eqnarray}\label{pT_equation}
p(T) &=& \pr(\tau_0 < \infty | Q_0 = T) \nonumber \\
&=& \pr(\tau_0 < \infty | Q_h \geq T) (\frac12 + \delta) + \pr(\tau_0 < \infty | Q_h < T) (\frac12 - \delta) + o(h) \nonumber\\
&=& \Big[ 0 + p(T) \E_{Q_h}[\pr(\tau_T < \infty)| Q_h \geq T]  \Big] (\frac12 + \delta) \nonumber\\
 && + \Big[1 \cdot \E_{Q_h}[\pr(\tau_0 < \tau_T)| Q_h < T] + p(T) \E_{Q_h}[\pr(\tau_T < \tau_0)| Q_h < T]\Big] (\frac12 - \delta) \nonumber\\
 && + o(h) \nonumber\\
&=& \Big[ 0 + p(T) \E[ e^{-\t_0 Z} | Z \geq 0]  \Big] (\frac12 + \delta)\nonumber\\
 && + \E\Big[ \frac{e^{-\t_1 T} (e^{-\t_1 Z}-1)}{1-e^{-\t_1 T}} \Big| Z < 0\Big]  (\frac12 - \delta) \nonumber\\
 && + p(T) \E\Big[ 1 - \frac{e^{-\t_1 T} (e^{-\t_1 Z}-1)}{1-e^{-\t_1 T}} \Big| Z < 0\Big] (\frac12 - \delta) + o(h),
\end{eqnarray}
where $Z = Q_h - T$ is a Gaussian random variable with mean $\a h$ and variance $h$. In order to obtain $p(T)$, we need to compute $E[e^{-\t_0 Z} | Z \geq 0]$ and $E[e^{-\t_1 Z} | Z < 0]$. We may compute these expressions exactly, but it is simpler to compute upper and lower bounds and then take the limit as $h\rightarrow 0$. By (\ref{fact}), we have
\begin{eqnarray}\label{Z_bound1}
E[e^{-\t_0 Z} | Z \geq 0] &\leq& \E[1 - \t_0 Z + \frac{(\t_0 Z)^2}{2} | Z \geq 0] = 1 - \t_0 \beta \sqrt{h} + o(\sqrt{h}), \nonumber \\
E[e^{-\t_0 Z} | Z \geq 0] &\geq& \E[1 - \t_0 Z | Z \geq 0] = 1 - \t_0 \beta \sqrt{h} + o(\sqrt{h}),
\end{eqnarray}
where $\beta$ is a constant. Similarly, we get
$$ 1 + \t_1 \beta \sqrt{h} + o(\sqrt{h}) \leq E[e^{-\t_1 Z} | Z < 0] \leq 1 + \t_1 \beta \sqrt{h} + o(\sqrt{h}).
$$

Plugging these relations back in (\ref{pT_equation}), dividing by $\beta \sqrt{h}$ and taking the limit as $h$ goes to zero, we obtain the following equation
\begin{equation}\label{Z_bound2}
p(T) \Big[\t_0 +\frac{e^{-\t_1 T}}{1-e^{-\t_1 T}}\t_1 \Big] = \frac{e^{-\t_1 T}}{1-e^{-\t_1 T}}\t_1,
\end{equation}
which gives the desired result for $p(T)$ after rearranging the terms.
\end{bproof}

\vspace{.2in}
\begin{bproof} \textbf{of Theorem \ref{thm:threshold_cost}. } The proof technique for this theorem is analogous to that of Theorem \ref{thm:threshold_policy}. First, we consider the cases $D>T$ and $D<T$ and characterize the expected cost in terms of $J(T)$. Let $\tau_x$ be defined as in (\ref{hitting_time_def}).

For the case $D > T$, note that no cost is incurred until the threshold $T$ is reached. Hence
\begin{eqnarray*}
J(D) &=&  \E \bigg[\int_0^{\tau_e} u_t dt \Big |Q_0 = D\bigg]\\
 &=&  \E \bigg[\int_0^{\tau_e} u_t dt \Big | \tau_T = \infty, Q_0 = D \bigg] \pr(\tau_T = \infty | Q_0 = D) \\
 && + \E \bigg[\int_0^{\tau_e} u_t dt \Big | \tau_T < \infty , Q_0 = D\bigg] \pr(\tau_T < \infty  | Q_0 = D) \\
 &= & 0 +   \E\bigg[0 +\int_{\tau_T}^{\tau_e} u_t dt \Big | \tau_T < \infty, Q_0 = D \bigg] \pr(\tau_T < \infty  | Q_0 = D)\\
 &\stackrel{(a)}{=}& \E\bigg[\int_{0}^{\tau_e} u_t dt \Big | Q_0 = T \bigg] \pr(\tau_T < \infty  | Q_0 = D)\\
 &=&J(T) \pr(\tau_T < \infty| Q_0 = D) \stackrel{(b)}{=} J(T)e^{-\t_0(D-T)},
\end{eqnarray*}
where $(a)$ follows from the memoryless property of Brownian motion and $(b)$ is a consequence of Lemma \ref{lemma:degenerate}.

For the case $D < T$, we can use a strong Markov property to write the following for a small time-step $h$:
\begin{eqnarray*}
J(D) &=& J(Q_0) = 1\cdot h + \E_W[J(Q_h)]  \\
&=& h + \E_W\Big[J(D) + \frac{\partial J}{\partial D} ((R_1 - 1)h + W_h) +  \frac12 \cdot \frac{\partial^2 J}{\partial D^2} \cdot h \Big] + o(h) \\
&=& h + J(D) + (R_1 - 1) \frac{\partial J}{\partial D}\cdot h +  \frac12 \cdot \frac{\partial^2 J}{\partial D^2} \cdot h + o(h),
\end{eqnarray*}
which gives the following ordinary differential equation after dividing by $h$ and taking the limit as $h\rightarrow 0$
\begin{equation}\label{J_ODE}
    \frac{\partial^2 J}{\partial D^2}  + \t_1 \frac{\partial J}{\partial D} + 2 = 0,\quad 0\leq D\leq T.
\end{equation}

It is straightforward to solve the differential equation in (\ref{J_ODE}) with the boundary condition $J(0) = 0$, and $J(T)$ as a parameter. This result completes the characterization of $J(D)$ described in (\ref{J_D}) as a function of $J(T)$. Also, note that if we set the boundary condition $J(T) = 0$, $J(D)$ gives the expected time to hit either of the boundaries at $0$ or $T$, i.e., we get
\begin{equation}\label{E_minT}
    \E\Big[\min\{\tau_0, \tau_T\}\big| Q_0 = D\Big] = \frac{2}{\t_1}\bigg[T \cdot \frac{1- e^{-\t_1 D}}{1- e^{-\t_1 T}} - D \bigg]
\end{equation}

Now, we use a similar technique as in the proof of Theorem \ref{thm:threshold_policy} to compute $J(T)$. Consider a small time step deviation $h > 0$ from the initial condition $Q_0 = T$. Similarly to (\ref{pT_equation}), we have
\begin{eqnarray*}
J(T) &=& \gamma h + \Big[ J(T) \E_{Q_h}[\pr(\tau_T < \infty)| Q_h \geq T]  \Big] \pr(Q_h \geq T | Q_0 = T) \\
 && + \bigg[ 1\cdot \E\Big[\min\{\tau_0, \tau_T\}\big| Q_h < T\Big] + 0 \cdot \E_{Q_h}[\pr(\tau_0 < \tau_T)| Q_h < T] \\
 && \qquad + J(T) \E_{Q_h}[\pr(\tau_T < \tau_0)| Q_h < T]\bigg]  \pr(Q_h < T | Q_0 = T) +o(h)\\
&=& \gamma h + J(T) \E[ e^{-\t_0 Z} | Z \geq 0]  (\frac12 + \delta)\nonumber\\
 && +  \E\Big[\min\{\tau_0, \tau_T\}\big| Q_h < T\Big] (\frac12 - \delta) \nonumber\\
 && + J(T) \E\Big[ 1 - \frac{e^{-\t_1 T} (e^{-\t_1 Z}-1)}{1-e^{-\t_1 T}} \Big| Z < 0\Big] (\frac12 - \delta) + o(h),
\end{eqnarray*}
where $\gamma$ is a constant bounded by 1, $\delta = \Theta(h)$, and $Z = Q_h -T$ is a Gaussian random variable with mean $\a h$ and variance $h$ for some constant $\a$. The second inequality in the preceding relations follows from (\ref{crossingb_prob}) and Lemma \ref{lemma:degenerate}. By (\ref{E_minT}), applying the bounds in (\ref{Z_bound1}) and (\ref{Z_bound2}), dividing by $\beta \sqrt{h}$ and taking the limit at $h\rightarrow 0$, we obtain the following equation
$$J(T)\bigg[\t_1 \frac{e^{-\t_1 T}}{1- e^{-\t_1 T}} + \t_0\bigg] = \frac{2}{\t_2} \bigg[1- \t_1 T \frac{e^{-\t_1 T}}{1- e^{-\t_1 T}} \bigg], $$
which gives us the desired expression in (\ref{J_T}) after rearranging the terms.
\end{bproof}

\vspace{.2in}
\begin{bproof} \textbf{of Theorem \ref{thm:value_function_HJB}. }
In order to facilitate verification of the HJB equation (\ref{HJB_fluid2}), we rewrite and slightly manipulate the candidate solution $V(Q,p)$ given by (\ref{value_function_candidate}). Recall that
\begin{equation}\label{V_candidate}
V(Q,p) =  \left\{
               \begin{array}{ll}
                 V_0(Q,p), & \hbox{$Q \geq T(Q,p)$} \\
                 V_1(Q,p), & \hbox{$Q \leq T(Q,p)$,}
               \end{array}
             \right.
\end{equation}
where
\begin{equation}\label{V0}
   V_0(Q,p) = e^{-\t_0(Q - T(Q,p))} J(T(Q,p)),
\end{equation}
\begin{equation}\label{V1}
   V_1(Q,p) = \big[J(T(Q,p)) + \frac{2}{\t_1}T(Q,p)\big]\frac{1-e^{-\t_1 Q}}{1-e^{-\t_1 T(Q,p)}} - \frac{2}{\t_1}Q,
\end{equation}
\begin{equation}\label{J_TQp}
    J(T(Q,p)) = \frac{\frac{2}{\t_1^2}\big[1 - (1+\t_1 T(Q,p))e^{-\t_1 T(Q,p)}\big]}{\frac{\t_0}{\t_1} + (1- \frac{\t_0}{\t_1}) e^{-\t_1 T(Q,p)}}.
\end{equation}

Note that $p^{T(Q,p)}(Q) = p$; and by definition of $p^T(\cdot)$ in (\ref{pint_threshold}) we may verify that the condition $Q \gtrless T(Q,p)$ is equivalent to
$p \gtrless \frac{e^{-\t_1 Q}}{\frac{\t_0}{\t_1} + (1- \frac{\t_0}{\t_1}) e^{-\t_1 Q}}.$
Therefore, we can partition the feasible region $\mathcal R$ into two sub-regions $\mathcal R_0$ and $\mathcal R_1$, such that
\begin{eqnarray*}
\mathcal R_0 &=& \big\{(Q,p): p \geq \frac{e^{-\t_1 Q}}{\frac{\t_0}{\t_1} + (1- \frac{\t_0}{\t_1}) e^{-\t_1 Q}}\big\} \cap \mathcal R, \\
\mathcal R_1 &=& \big\{(Q,p): p < \frac{e^{-\t_1 Q}}{\frac{\t_0}{\t_1} + (1- \frac{\t_0}{\t_1}) e^{-\t_1 Q}}\big\} \cap \mathcal R.
\end{eqnarray*}
Hence, we need to verify HJB for two regions separately, using the proper expression in (\ref{V_candidate}).

In order to verify the HJB equation for the candidate solution (\ref{value_function_candidate}), we also need to characterize the optimal value of the minimization problem in (\ref{HJB_fluid2}). First, we characterize the optimal solution pair $(u^*, \ph^*)$ for any feasible state $(Q,p)\in \mathcal R$.
 Observe that the optimization problem  in (\ref{HJB_fluid2}) can be decomposed into two smaller problems:
\begin{eqnarray} \label{u_opt_problem}
  u^*(Q,p) &=& \textrm{argmin}_{u \in \{0,1\}} \bigg\{ u +  \frac{\partial V}{\partial Q} (R_{u} - 1) \bigg\}, \\
  \ph^*(Q,p) &=& \textrm{argmin}_{\ph} \bigg\{ \frac12 \frac{\partial^2 V}{\partial p^2} (\ph)^2 + \frac{\partial^2 V}{\partial Q \partial p} \ph \bigg\} \label{p_opt_problem}.
\end{eqnarray}

The minimization problem in (\ref{p_opt_problem}) is quadratic and hence convex in $\ph$. So we can use first order optimality condition to get
\begin{equation}\label{p_opt}
    \ph^*(Q,p) = - {\frac{\partial^2 V}{\partial Q \partial p}(Q,p)} / {\frac{\partial^2 V}{\partial p^2}(Q,p) }.
\end{equation}

For the problem in (\ref{u_opt_problem}), $u^*(Q,p) = 0$ is and only if
$$0 + \frac{\partial V}{\partial Q} (R_{0} - 1)  \leq 1 + \frac{\partial V}{\partial Q} (R_{1} - 1), $$
or equivalently
\begin{equation}\label{u_opt_cond}
\frac{\partial V}{\partial Q}(Q,p) \geq - \frac{1}{R_1 - R_0}.
\end{equation}

Using the chain rule and the implicit function theorem, we can analytically calculate $\frac{\partial V}{\partial Q}$ from (\ref{V_candidate}) to conclude that the condition in (\ref{u_opt_cond}) holds if and only if $(Q,p) \in \mathcal R_0$. In other words, $u^*(Q,p) = 0$ for all $(Q,p) \in \mathcal R_0$ and $u^*(Q,p) = 1$ for all $(Q,p) \in \mathcal R_1$.

In summary, the HJB equation in (\ref{HJB_fluid2}) boils down to the following equations:
\begin{eqnarray}\label{HJB_partition1}
0 &=&   \frac{\t_0}{2} \frac{\partial V_0}{\partial Q} + \frac12 \frac{\partial^2 V_0}{\partial Q^2} - \frac12 \Big(\frac{\partial^2 V_0}{\partial Q \partial p}\Big)^2 / \frac{\partial^2 V_0}{\partial p^2}  , \quad \foral (Q,p) \in \mathcal R_0, \\
0 &=&  1 + \frac{\t_1}{2} \frac{\partial V_1}{\partial Q} + \frac12 \frac{\partial^2 V_1}{\partial Q^2} - \frac12 \Big(\frac{\partial^2 V_1}{\partial Q \partial p}\Big)^2 / \frac{\partial^2 V_1}{\partial p^2}  , \quad \foral (Q,p) \in \mathcal R_1,  \label{HJB_partition2}
\end{eqnarray}
where $V_0(Q,p)$ and $V_1(Q,p)$ are given by (\ref{V0}) and (\ref{V1}), respectively. The verification of (\ref{HJB_partition1}) and (\ref{HJB_partition2}) is straightforward but tedious. We omit the details for brevity. We may simply use symbolic analysis tools such as Mathematica for this part.
\end{bproof}

\vspace{.2in}
\begin{bproof} \textbf{of Theorem \ref{thm:threshold_optimalit}. }
From the proof of Theorem \ref{thm:value_function_HJB}, we have characterized the optimal policy $\pi^*:\R \times [0,1] \rightarrow \{0,1\}\times \R$ as
$$\pi^*(Q,p) = \Big(u^*(Q,p), \ph^*(Q,p)\Big),$$
where $u^*(Q,p) = 0$ if and only if $(Q,p) \in \mathcal R_0$ and $\ph^*(Q,p)$ given by (\ref{p_opt}) can be  explicitly computed as follows:
\begin{eqnarray}\label{ph_opt_explicit}
\ph^*(Q,p) &=& - \t_0 p, \quad \foral (Q,p) \in \mathcal R_0,\nonumber \\ \vspace{.1in}
\ph^*(Q,p) &=& - \bigg[\frac{(1-p)e^{\t_1 Q}}{(1-p e^{\t_1 Q})^2} -\frac{1}{1-p e^{\t_1 Q}} \bigg]/ \bigg[\frac{(1-p)(e^{2\t_1 Q} - e^{\t_1 Q})}{\t_1(1-p e^{\t_1 Q})^2} -\frac{2(e^{\t_1 Q} - 1)}{\t_1(1-p e^{\t_1 Q})} \bigg]\nonumber \\
&=& -\frac{\t_1}{(1+p)e^{\t_1 Q} -2},\hspace{140pt} \foral (Q,p) \in \mathcal R_1.
\end{eqnarray}

Moreover, the dynamics of the state process under the optimal control policy is given by:
\begin{eqnarray*}
dQ^*_t &=& (R_{u^*(Q^*_t, p^*_t)} - 1) dt + dW_t, \\
dp^*_t &=& \ph^*(Q^*_t, p^*_t) dW_t.
\end{eqnarray*}

For the proof of the first claim, observe that for a given manifold  $\mathcal M(D,\e)$, we have
\begin{equation}\label{T_constant_claim}
    T(Q,p) = T(D,\e), \quad \foral (Q,p) \in \mathcal M(D,\e).
\end{equation}
This claim holds by definition of $\mathcal M(D,\e)$ in (\ref{manifold_fluid}), and the fact that $T(Q,p)$ is the unique solution of $p^{T}(Q) = p$. Next, we show that if $(Q^*_t, p^*_t) \in \mathcal M(D,\e)$ for any $t\geq 0$, then after executing the optimal policy $\pi^*$,  the state process stays on the manifold $\mathcal M(D,\e)$. First, consider the case where $Q^*_t \geq T(Q^*_t,p^*_t) = T(D,\e)$. In this case, $u^*(Q^*_t,p^*_t) = 0$ and $ \ph^*(Q^*_t, p^*_t) = -\t_0 p^*_t$. We would like to show that the solution of the stochastic differential equation $dp^*_t = -\t_0 p^*_t dW_t$ coincides with the invariant manifold given by $\tilde p_t = e^{-\t_0(Q^*_t - T(D,\e))}p(T(D,\e))$. By employing It\={o}'s Lemma we can check that $e^{-\t_0(Q^*_t - T(D,\e))}p(T(D,\e))$ is indeed the desired solution. In particular, using the proper evolution of the queue-length process $Q^*_t$, we can write
\begin{eqnarray*}
   d\tilde p_t  &=& -\t_0 e^{-\t_0(Q^*_t - T(D,\e))}p(T(D,\e)) (dQ^*_t) \\
   && + \frac12 \t_0^2 e^{-\t_0(Q^*_t - T(D,\e))}p(T(D,\e)) dt \\
   &=& e^{-\t_0(Q^*_t - T(D,\e))}p(T(D,\e)) \Big[ -\t_0 (\frac{\t_0}{2} dt + dW_t) +\frac12 \t_0^2 \Big] \\
   &=& -\t_0 e^{-\t_0(Q^*_t - T(D,\e))}p(T(D,\e)) dW_t \\
   &=& -p^*_t dW_t.
\end{eqnarray*}

Next, consider the case where  $Q^*_t < T(Q^*_t,p^*_t) = T(D,\e)$. In this case, $u^*(Q^*_t,p^*_t) = 1$ and $ \ph^*(Q^*_t, p^*_t)$ is given by (\ref{ph_opt_explicit}). Similarly to the previous case, we may use  It\={o}'s Lemma to verify that the state process stays on the invariant manifold given by
$$ \tilde p_t = p^{T(D,\e)}(Q^*_t) = e^{-\t_1 Q^*_t} + p(T(D,\e))(1 - \frac{\t_0}{\t_1})\Big(1 -e^{-\t_1 Q^*_t}\Big).$$

By  It\={o}'s Lemma we have
\begin{eqnarray*}
   d\tilde p_t &=& \frac{\partial p^{T(D,\e)}(Q^*_t)}{\partial Q} dQ^*_t + \frac12 \cdot \frac{\partial^2 p^{T(D,\e)}(Q^*_t)}{\partial Q^2} dt \\
   &=& -\t_1 e^{-\t_1 Q^*_t} \Big[1 - p(T(D,\e))(1 - \frac{\t_0}{\t_1})\Big] (\frac{\t_1}{2} dt + dW_t) \\
   && + \frac12 \cdot \t_1^2 e^{-\t_1 Q^*_t} \Big[1 - p(T(D,\e))(1 - \frac{\t_0}{\t_1})\Big] dt \\
    &=&    -p^*_t dW_t = dp^*_t,
\end{eqnarray*}
which completes the proof of the first claim.

Now that we have established that the state process starting from $(D,\e)$ under optimal control stays on a one-dimensional invariant manifold $\mathcal M(D,\e)$, the optimality of the threshold policy $\pi^{T(D,\e)}(Q)$ is immediate. Recall that the decision process of importance is $u^*_t\in \{0,1\}$, and we know that $u^*(Q,p) = 0$ if and only if $Q \geq T(Q,p)$. Moreover, since the optimal state process stays on $\mathcal M(D,\e)$, we have $T(Q^*_t, p^*_t) = T(D,\e)$. Hence, the optimal control policy (given the initial condition) chooses the action $u^*(Q,p) = 0$ if and only if $Q \geq T(D,\e)$. Therefore, the optimal policy $\pi^*(Q,p)$ coincides with the threshold policy $\pi^{T(D,\e)}(Q)$. We may also verify that the interruption probability under the threshold policy conditioned on the history up to time $t$ is given by $p^*_t$.
\end{bproof}

\bibliographystyle{unsrt}
\bibliography{TAC}

\begin{thebibliography}{10}

\bibitem{Lab09}
C.~Labovitz, D.~McPherson, and S.~Iekel-Johnson.
\newblock 2009 {I}nternet {O}bservatory report.
\newblock In {\em NANOG-47}, October 2009.

\bibitem{cisco_VNI}
{Cisco Visual Networking Index}: Forecast and methodology, 2010-2015.
\newblock Cisco, 2011.

\bibitem{cisco_mobile}
{Cisco Visual Networking Index}: Global mobile data traffic forecast update,
  2010-2015.
\newblock Cisco, 2011.

\bibitem{RLNC}
T.~Ho, R.~Koetter, M.~M\'edard, M.~Effros, J.~Shi, and D.~Karger.
\newblock A random linear network coding approach to multicast.
\newblock {\em IEEE Transactions on Information Theory}, 52:4413--4430, 2006.

\bibitem{JSAC}
A.~ParandehGheibi, M.~M{\'e}dard, A.~E. Ozdaglar, and S.~Shakkottai.
\newblock Avoiding interruptions - a {QoE} reliability function for streaming
  media applications.
\newblock {\em IEEE Journal on Selected Areas in Communications},
  29(5):1064--1074, 2011.

\bibitem{Chen04}
R.C. Chen.
\newblock Constrained stochastic control with probabilistic criteria and search
  optimization.
\newblock In {\em Proc. 43rd IEEE Conference on Decision and Control}, December
  2004.

\bibitem{Neely06}
M.~J. Neely.
\newblock Super-fast delay tradeoffs for utility optimal fair scheduling in
  wireless networks.
\newblock {\em IEEE Journal on Selected Areas in Communications ({JSAC}),
  Special Issue on Nonlinear Optimization of Communication Systems},
  24(8):1489--1501, 2006.

\bibitem{Neely10}
M.~J. Neely.
\newblock Delay-based network utility maximization.
\newblock In {\em Proc. of INFOCOM}, San Diego, CA, March 2010.

\bibitem{Srikant99}
S.~Lu, V.~Bharghavan, and R.~Srikant.
\newblock Fair scheduling in wireless packet networks.
\newblock {\em {IEEE/ACM} Transactions on Networking (TON)}, 7(4):473--489,
  1999.

\bibitem{Berry02}
R.A. Berry and R.G. Gallager.
\newblock Communication over fading channels with delay constraints.
\newblock {\em IEEE Transactions on Information Theory}, 48(5):1135--1149,
  2003.

\bibitem{HouKumar10}
I.~Hou and P.~R. Kumar.
\newblock Scheduling heterogeneous real-time traffic over fading wireless
  channels.
\newblock In {\em Proc. of INFOCOM}, San Diego, CA, 2010.

\bibitem{ZhoChi07}
Y.~Zhou, D.~Chiu, and J.~Lui.
\newblock A simple model for analyzing {P2P} streaming protocols.
\newblock In {\em Proc. IEEE ICNP 2007}.

\bibitem{BonMas08}
T.~Bonald, L.~Massouli\'{e}, F.~Mathieu, D.~Perino, and A.~Twigg.
\newblock Epidemic live streaming: optimal performance trade-offs.
\newblock {\em SIGMETRICS Perform. Eval. Rev.}, 36(1):325--336, 2008.

\bibitem{ZhaLuiChi_09}
Bridge~Q. Zhao, John~C.S. Lui, and Dah-Ming Chiu.
\newblock Exploring the optimal chunk selection policy for data-driven {P2P}
  streaming systems.
\newblock In {\em The 9th International Conference on Peer-to-Peer Computing},
  2009.

\bibitem{YinSri10}
L.~Ying, R.~Srikant, and S.~Shakkottai.
\newblock {The Asymptotic Behavior of Minimum Buffer Size Requirements in Large
  {P2P} Streaming Networks}.
\newblock In {\em Proc. of the Information Theory and Applications Workshop},
  San Diego, CA, 2010.

\bibitem{KumAlt07}
D.~Kumar, E.~Altman, and J-M. Kelif.
\newblock {Globally Optimal User-Network Association in an 802.11 WLAN and 3G
  UMTS Hybrid Cell}.
\newblock In {\em Proc. of the 20th International Teletraffic Congress
  (ITC-20)}, Ottawa, 2007.

\bibitem{Altman_book}
E.~Altman.
\newblock {\em Constrained {M}arkov Decision Processes}.
\newblock Chapman and Hall, 1999.

\bibitem{Piunovskiy97}
A.B. Piunovskiy.
\newblock {\em Optimal Control of Random Sequences in Problems with
  Constraints}.
\newblock Kluwer Academic Publishers, 1997.

\bibitem{Piunovskiy98}
A.B. Piunovskiy.
\newblock Controlled random sequences: the convex analytic approach and
  constrained problems.
\newblock {\em Russian Mathematical Surveys}, 53:1233--1293, 1998.

\bibitem{Feinberg95}
E.A. Feinberg and A.~Shwartz.
\newblock Constrained {M}arkov decision models with discounted rewards.
\newblock {\em Mathematics of Operations Research}, 20:302--320, 1995.

\bibitem{ChenBlankenship04}
G.L.~Blankenship R.C.~Chen.
\newblock Dynamic programming equations for discounted constrained stochastic
  control.
\newblock {\em IEEE Transactions on Automatic Control}, 49:699--709, 2004.

\bibitem{Piunovskiy00}
A.B. Piunovskiy and X.~Mao.
\newblock Constrained {M}arkov decision processes: the dynamic programming
  approach.
\newblock {\em Operations Research Letters}, 27:119--126, 2000.

\bibitem{BrockettBook}
R.~Brockett.
\newblock Stochastic control.
\newblock Lecture Notes, Harvard University, Cambridge, MA, 2009.

\bibitem{MPTCP1}
D.~Wischik, M.~Handley, and C.~Raiciu.
\newblock Control of multipath {TCP} and optimization of multipath routing in
  the internet.
\newblock {\em Network Control and Optimization}, 5894:204--218, 2009.

\bibitem{MPTCP2}
D.~Wischik, M.~Handley, and C.~Raiciu.
\newblock Control of multipath {TCP} and optimization of multipath routing in
  the internet.
\newblock {\em Network Control and Optimization}, 5894:204--218, 2009.

\bibitem{MPTCP_IETF}
A.~Ford, C.~Raiciu, M.~Handley, S.~Barre, and J.~Iyengar.
\newblock Architectural guidelines for multipath {TCP} development.
\newblock July 2011.
\newblock draft-ietf-mptcp-architecture-05.

\bibitem{TCP_jaykumar}
Jay~Kumar Sundararajan, Szymon Jakubczak, Muriel M{\'e}dard, Michael
  Mitzenmacher, and Jo{\~a}o Barros.
\newblock Interfacing network coding with {TCP}: an implementation.
\newblock {\em CoRR}, abs/0908.1564, 2009.

\bibitem{SP_book}
I.~Karatzas and S.~Shreve.
\newblock {\em Brownian Motion and Stochastic Calculus}.
\newblock Springer, 1997.

\bibitem{Kurtz78}
T.~Kurtz.
\newblock Strong approximation theorems for density dependent {Markov} chains.
\newblock {\em Stochastic Processes and Their Applications}, 6:223--240, 1978.

\end{thebibliography}

\end{document}